\documentclass[a4paper,USenglish,cleveref, autoref, thm-restate]{lipics-v2021}
\usepackage{xspace}
\usepackage[disable]{todonotes}

\usepackage[normalem]{ulem}
\usepackage{csquotes}
\usepackage{complexity}
\usepackage{tikz}
\usepackage{cite}
\usepackage{mathtools}

\usetikzlibrary{arrows.meta,patterns,ipe}

  \makeatletter
\pretocmd{\thmt@rst@storecounters}{\Hy@SaveLastskip}{}{}
\apptocmd{\thmt@rst@storecounters}{\Hy@RestoreLastskip}{}{}
\makeatother

\newboolean{long}
\usepackage{pifont}
\newcommand{\AppendixSymbol}{\ding{72}}
\setboolean{long}{false}  \ifthenelse{\boolean{long}}{
	\NewDocumentEnvironment{prooflater}{m}{\begin{proof}}{\end{proof}\ignorespacesafterend}
	\NewDocumentEnvironment{proofsketch}{o +b}{}{\ignorespacesafterend}
	\newcommand{\restateref}[1]{}
    \newcommand{\restaterefnoreference}[1]{}
	\NewDocumentEnvironment{statelater}{m}{}{}
	
	 	\NewDocumentCommand{\onlyShort}{+m}{}
	\NewDocumentCommand{\onlyLong}{+m}{#1}
	
}{
	\NewDocumentEnvironment{prooflater}{m +b}{ 		 		\expandafter\global\expandafter\def\csname#1\endcsname{\begin{proof}#2\end{proof}} 	}{\ignorespacesafterend}
	\NewDocumentEnvironment{proofsketch}{O{Proof sketch.}}{\begin{proof}[#1]}{\end{proof}\ignorespacesafterend}
	\usepackage{apptools}
	 	 	 	\newcommand{\restateref}[1]{[\IfAppendix{\hyperref[#1]{\AppendixSymbol{}}}{\hyperref[#1*]{\AppendixSymbol{}}}]}

    \newcommand{\restaterefnoreference}[1]{[\IfAppendix{\AppendixSymbol{}}{\AppendixSymbol{}}]}

	\NewDocumentEnvironment{statelater}{m +b}{ 		\expandafter\global\expandafter\def\csname#1\endcsname{#2} 	}{\ignorespacesafterend}

	 	\NewDocumentCommand{\onlyShort}{+m}{#1}
	\NewDocumentCommand{\onlyLong}{+m}{}
	
}
\let\oldrestatable\restatable
\def\restatable{\expandafter\oldrestatable}

 \makeatletter

\let\c@claim\relax
\makeatother
\theoremstyle{claimstyle}
\newtheorem{claim}{Claim}
 \AtBeginEnvironment{proof}{\setcounter{claim}{0}}

\title{
       Ramsey Meets Simultaneous Planarity:\\
Vertex Integrity and Pathwidth
 }
\titlerunning{Ramsey Meets Simultaneous Planarity:
Vertex Integrity and Pathwidth}

\author{Thomas Depian}{TU Wien, Austria}{tdepian@ac.tuwien.ac.at}{https://orcid.org/0009-0003-7498-6271}{}

\author{Simon D. Fink}{TU Wien, Austria}{sfink@ac.tuwien.ac.at}{https://orcid.org/0000-0002-2754-1195}{}

\author{Alexander Firbas}{TU Wien, Austria}{afirbas@ac.tuwien.ac.at}{https://orcid.org/0009-0007-2049-2144}{}

\author{Robert Ganian}{TU Wien, Austria}{rganian@gmail.com}{https://orcid.org/0000-0002-7762-8045}{}

\author{Matthias Pfretzschner}{University of Passau, Germany}{pfretzschner@fim.uni-passau.de}{https://orcid.org/0000-0002-5378-1694}{}

\author{Ignaz Rutter}{University of Passau, Germany}{rutter@fim.uni-passau.de}{https://orcid.org/0000-0002-3794-4406}{}

\Copyright{Thomas Depian, Simon D. Fink, Alexander Firbas, Robert Ganian, Matthias Pfretzschner, and Ignaz Rutter}
\authorrunning{T.\ Depian, S.\ D.\ Fink, A.\ Firbas, R.\ Ganian, M.\ Pfretzschner, and I.\ Rutter}

\funding{Thomas Depian, Simon D. Fink, Alexander Firbas, and Robert Ganian acknowledge support from the Vienna Science and Technology Fund (WWTF) [10.47379/ICT22029]. Alexander Firbas and Robert Ganian also acknowledge support from the Austrian Science Fund (FWF) [10.55776/Y1329]. Matthias Pfretzschner and Ignaz Rutter acknowledge support from the Deutsche Forschungsgemeinschaft (DFG, German Research Foundation) [541433306].}

\ccsdesc[500]{Mathematics of computing~Graph algorithms}
\ccsdesc[500]{Theory of computation~Parameterized complexity and exact algorithms}
 
\keywords{SEFE, Simultaneous Planarity, Fixed-Parameter Tractability, Vertex Integrity, Ramsey Theory, Pathwidth, NP-hardness}  

\nolinenumbers

\theoremstyle{definition}

\crefname{kernel}{Kernel}{Kernels}
\Crefname{kernel}{Kernel}{Kernels}

\DeclareMathOperator{\Ima}{Im}

\newcommand{\mc}{\mathcal}
\newcommand{\tw}[1]{\mathrm{tw}(#1)}               \newcommand{\pw}[1]{\mathrm{pw}(#1)}                                                          \newcommand{\vc}[1]{\mathrm{vc}(#1)}                                     \newcommand{\vi}[1]{\mathrm{vi}(#1)}               
\newcommand{\dcup}{\mathop{\dot\cup}}

\newcommand{\att}{\text{att}}
\newcommand{\iso}{\ensuremath{\kappa}}
\newcommand{\phix}{\ensuremath{\varphi_{i,x}}}
\newcommand{\tphix}{\ensuremath{\tilde \varphi_{i,x}}}
\newcommand{\emb}{\ensuremath{\mc E}\xspace}
\newcommand{\embU}{\ensuremath{\emb_\cup}\xspace}
\newcommand{\embP}{\ensuremath{\tilde {\mc E}}\xspace}
\newcommand{\embPU}{\ensuremath{\embP_\cup}\xspace}

\newcommand{\upup}{\mathbin{\uparrow\uparrow}}

\newcommand{\Instance}{\ensuremath{(G_1, \dots, G_k)}}

\newcommand{\sefe}{\textsc{SEFE}\xspace}
\newcommand{\ssefe}{\textup{\textsc{S-SEFE}}\xspace}

\newcommand{\Ir}{\ensuremath{\mc I_\mathrm{r}}\xspace}
\newcommand{\Is}{\ensuremath{\mc I_\mathrm{s}}\xspace}

\newcommand{\expt}[2]{\ensuremath{\mathrm{exp}_2^{(#1)}(#2)}}

\newcommand{\union}{\ensuremath{G_\cup}\xspace}
\newcommand{\shared}{\ensuremath{G_\cap}\xspace}
\newcommand{\GU}{\union}

 \newcommand{\SEFE}{\ssefe}

 \EventEditors{John Q. Open and Joan R. Access}
\EventNoEds{2}
\EventLongTitle{42nd Conference on Very Important Topics (CVIT 2016)}
\EventShortTitle{CVIT 2016}
\EventAcronym{CVIT}
\EventYear{2016}
\EventDate{December 24--27, 2016}
\EventLocation{Little Whinging, United Kingdom}
\EventLogo{}
\SeriesVolume{42}
\ArticleNo{23}
 
\hideLIPIcs

\begin{document}

\maketitle

 \begin{abstract}
We study the following simultaneous embedding problem: Given $k$ graphs on the same vertex set with the same pairwise intersection, do they admit planar drawings that coincide on their shared graph? We establish fixed-parameter tractability with respect to the vertex integrity of the union graph (that is, the union of the input graphs) plus $k$, resolving an open question and generalizing the known algorithm parameterized by vertex cover number plus $k$. Our main technical contribution on the way towards this result is the extension of the recently introduced Ramsey pruning technique beyond ordering problems.
   Finally, we complement our tractability result with a lower bound that establishes \NP-hardness when both $k$ and the pathwidth of the union graph are bounded by constants, ruling out a generalization of our main result to path- or treewidth.
           \end{abstract}

% \setcounter{page}{0}
% \newpage

\section{Introduction}

The task of computing a plane drawing of a graph---or determining that none exists---has had a particularly broad impact on theoretical computer science. An important motivation for such drawings is that they give rise to more easily readable visualizations of graphs~\cite{Purchase00,PurchaseCA02}. Nevertheless, in many scenarios there is a set of graphs to visualize, rather than a single graph. Examples include dynamic networks that change over time, as well as networks obtained by applying different reconstruction methods to the same underlying data. In such settings, independently computing a planar drawing of each graph is insufficient: to allow a user to compare the drawings, the common parts of the graphs should be represented consistently. This task is captured by the \textsc{Simultaneous Embedding with Fixed Edges} (\sefe) problem: Given graphs $G_1,\dots,G_k$, decide whether they admit planar drawings $\Gamma_1,\dots,\Gamma_k$ such that every vertex and edge shared by two graphs is represented by the same point and the same curve, respectively, in both drawings.

An important and well-studied special case is the \emph{sunflower case}, where the intersection $G_i\cap G_j$ is the same graph $G_\cap$ for all distinct $i,j\in[k]$. We refer to this problem as \emph{sunflower \sefe}, or \ssefe, and call $G_\cap$ the \emph{shared graph}. Edges of the shared graph are \emph{shared}, while every other edge is \emph{exclusive} to one input graph. Without loss of generality, all input graphs have the same vertex set $V$, since exclusive vertices can be added as isolated vertices to the other graphs. A sunflower instance can thus be represented by its \emph{union graph} $G_\cup$, whose edge set is partitioned into the shared edges $E_\cap$ and the exclusive edges $E_1,\dots,E_k$. The task is to draw $G_\cup$ so that, for each $i\in[k]$, the drawing restricted to $E_\cap\cup E_i$ is planar. Throughout this paper, we consider this sunflower setting.

The computational complexity of simultaneous planarity has received considerable attention over the last two decades. The problem is \NP-complete already for three input graphs~\cite{sefe_in_np}, even in the sunflower case~\cite{Schaefer12} and with the additional restriction that the shared graph is a tree~\cite{AngeliniLN15}. For two input graphs, every instance is a sunflower instance and the complexity of the problem remains open. Nevertheless, a substantial body of work has identified increasingly general tractable special cases. In particular, the problem can be solved in linear time if the embedding of the shared graph is fixed~\cite{AngeliniBFJKPR15} or if the shared graph is biconnected~\cite{DBLP:journals/jda/AngeliniBFPR12, DBLP:journals/jgaa/HaeuplerJL13}. More recently, polynomial-time algorithms have been established for two input graphs with a connected shared graph~\cite{FulekT22,BlasiusFR23,FinkR25}. We refer to the survey by Rutter~\cite{Rutter20} for a more detailed overview.

A useful perspective on these results is provided by the combinatorial formulation of J\"unger and Schulz~\cite{DBLP:journals/jgaa/JungerS09}: a sunflower instance admits a simultaneous drawing if and only if its input graphs admit planar embeddings that induce the same embedding on the shared graph. Here, an embedding comprises the cyclic order of incident edges around each vertex, called the \emph{rotation system}, together with the relative positions of connected components. In particular, \ssefe asks whether the shared graph has an embedding that can be independently extended to every input graph. This connects simultaneous planarity to the study of partial drawing extension, for which several fixed-parameter algorithms have been developed in recent years~\cite{EibenGHKN20,GanianHKPV21,BhoreGKMN23,CGK26}. However, those algorithms typically parameterize by the part of the graph that is still to be drawn, whereas in simultaneous planarity the embedding of the shared graph is itself unknown; moreover, all of the listed examples impose connectivity constraints on the input graph.

In this work, we target the parameterized
complexity of \ssefe, with particular emphasis on the structure of the union graph. This research direction was initiated by Fink, Pfretzschner, and Rutter~\cite{sefe_fpt_v1}, who established fixed-parameter tractability with respect to the number of input graphs $k$ plus either the \emph{vertex cover number} or the \emph{feedback edge number} of $G_\cup$. Recently, it was shown that the dependence on $k$ can be avoided for feedback edge number, but is necessary for vertex cover number~\cite{DFFGPR25}: \ssefe remains \NP-complete even when $G_\cup$ has vertex cover number at most~$15$. These results leave open how far the known fixed-parameter tractability can be extended when $k$ is included in the parameterization; as a concrete target, the aforementioned article~\cite{DFFGPR25} explicitly asked whether it can be extended to the \emph{vertex integrity} of the union graph.

The vertex integrity $\vi{G}$ of a graph $G$ measures, roughly speaking, how many vertex deletions are required to decompose $G$ into small connected components; it is the minimum, over all modulators $X\subseteq V(G)$, of $|X|$ plus the size of a largest connected component of $G-X$. The vertex integrity has proved useful in the design of parameterized algorithms for a variety of challenging problems~\cite{dvovrak2017solving,bodlaender2020subgraph,ganian2021structural,ganian2021structuralpaths,dvovrak2021complexity,GimaHKKO22,LampisM24,hanaka2024parameterized}\footnote{Some older works use the term \emph{fracture number}, which is parametrically equivalent to vertex integrity.} and is also a natural choice here since $\vi{G}\leq\vc{G}+1$, but graphs of bounded vertex integrity can have arbitrarily large vertex cover number. 
 
\smallskip
\noindent\textbf{Contributions.}\quad
Our main result settles the above open question and extends the known fixed-parameter tractability of \ssefe beyond vertex cover number:

\begin{restatable}{theorem}{vertexintegrityfpt}
\label{theorem:vertex_integrity_fpt}
\SEFE is fixed-parameter tractable when parameterized by the vertex integrity of the union graph plus the number $k$ of input graphs.
\end{restatable}

 The main challenge behind \cref{theorem:vertex_integrity_fpt} is to identify which parts of a simultaneous embedding can be captured by a bounded amount of combinatorial information. A standard approach for vertex integrity groups the components outside a modulator according to their isomorphism type, including their attachments to the modulator, and retains a bounded number of representatives of each type to produce a smaller graph (a \emph{kernel}) $G_\cup'$. For \ssefe, these types must also record which edges belong to which input graph; the goal would then be to show that a solution for $G_\cup'$ can be lifted to $G_\cup$\footnote{In particular, we would like an equivalence between $G_\cup'$ and $G_\cup$---however, the other direction follows trivially by the former being a subgraph of the latter.}. 
 
 Unfortunately, a simultaneous embedding of the retained representatives need not allow for the insertion of another isomorphic component: the existing copies can be interleaved around attachment vertices in a way that prevents such an extension. A concrete example of this behavior is provided in Figure~\ref{fig:noextension}, and this rules out lifting standard kernelization arguments from the vertex cover and feedback edge numbers~\cite{sefe_fpt_v1,DFFGPR25} to vertex integrity.

\begin{figure}
  \centering
  \hfill
  \begin{subfigure}{0.1\textwidth}
    \centering
    \includegraphics[page=1]{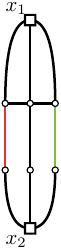}
    \caption{}
  \end{subfigure}
  \hfill
  \begin{subfigure}{0.4\textwidth}
    \centering
    \includegraphics[page=2]{figures/counterexample}
    \caption{}
  \end{subfigure}
  \begin{subfigure}{0.45\textwidth}
  	\centering
  	\includegraphics[page=3]{figures/counterexample}
  	\caption{}
  \end{subfigure}
  \hfill~
  \caption{
    \textbf{(a)} A union graph $H_\cup$ on two input graphs. Shared edges are drawn in black.
    \textbf{(b)} The union graph $\GU$ obtained by joining four copies of $H_\cup$ at $\{x_1, x_2\}$, together with a corresponding simultaneous embedding; the individual copies are highlighted in colors.
    Note that the depicted solution cannot be extended by adding any additional copies of $H_\cup$ without altering the given solution.
    A similar ``counterexample'' can be constructed for any number of copies of $H_\cup$ by adapting $\GU$ accordingly. 
    \textbf{(c)} A different simultaneous embedding of $\GU$ that can be extended by arbitrarily many additional copies of $H_\cup$ without altering the given solution.
   }
  \label{fig:noextension}
\end{figure}

A promising way to overcome this obstacle is the \emph{Ramsey pruning} technique introduced by Depian, Fink, Ganian, and Surianarayanan~\cite{DFGS.LLR.2025} for computing stack and queue layouts. At a high level, it starts by preprocessing an instance until it consists solely of a modulator and a large number of pairwise-isomorphic and pairwise-independent \emph{pieces}.  Its core step then uses Ramsey theory to extract a sufficiently large subset of pieces with pairwise-uniform interactions in a solution (thus avoiding non-extendable solutions such as the one illustrated in Figure~\ref{fig:noextension}(b), instead guaranteeing solutions such as the one in Figure~\ref{fig:noextension}(c)), and then exploits this uniformity to reintroduce deleted pieces. Ramsey pruning has by now also been extended to more general linear layout problems~\cite{DFFGNS26} as well as to computing the \emph{twin-width} of an input graph~\cite{GanianRocton26} w.r.t.\ the vertex integrity. In all of these settings, the relevant interactions are primarily captured by a total order: either of the vertices (in layout problems), or of contractions (in the case of twin-width). The main technical difficulty is that there is no immediate way of encoding two-dimensional simultaneous drawings as a total order: it is necessary to capture the pairwise interactions between pieces in a way which guarantees both (i) a bounded number of interaction types and (ii) extendability. 
 
Our starting point is to encode the drawings through the cyclic orders of edges around vertices. For connected input graphs, these rotation systems determine the individual embeddings. The crucial ingredient that allows us to use this representation is a local characterization of planar rotation systems (\Cref{lm:planarityCharacterization}), inspired by the cycle-and-bridge perspective underlying the planarity test of Auslander and Parter~\cite{Auslander1961}. Roughly speaking, a rotation system is planar if and only if, for every simple cycle, all fixtures of each bridge lie on the same side of the cycle. A violation can therefore be witnessed by a cycle together with a path in one of its bridges. In our setting, such a witness involves only a bounded number of pieces outside the modulator. This provides the link between uniformity of cyclic orders and preservation of planarity that is needed for Ramsey pruning, and opens the door towards applying the latter technique to other drawing and representation problems.

A caveat of this argument is that it requires each input graph to be connected. To deal with this, we revisit the connectivity reduction of Bl\"asius and Rutter~\cite{BlasiusR15} to show that suitable gadgets can make every input graph connected while keeping the two parameters bounded.
This construction does not, however, guarantee connectivity of the shared graph---to deal with this obstacle, we ensure that our reconstruction procedure is sufficiently rigid to also handle disconnected shared graphs.

Finally, a natural question is how far Theorem~\ref{theorem:vertex_integrity_fpt} can be extended towards less restrictive structural parameters. We complement our algorithm with a negative answer for pathwidth:
\begin{restatable}\restaterefnoreference{theorem:pathwidth_hardness}{theorem}{pathwidthhardness}
\label{theorem:pathwidth_hardness}
\SEFE remains \NP-hard even when there are $3$ input graphs and the union graph has pathwidth at most $5$.
\end{restatable}
This not only effectively rules out fixed-parameter tractability but also \XP-membership with respect to $k+\pw{G_\cup}$, and hence also $k+\tw{G_\cup}$, resolving a second open question of previous work~\cite{DFFGPR25}.

% \onlyShort{
% \smallskip
% \noindent
% \emph{Full details for statements marked by \AppendixSymbol\ can be found in the appendix.}
% }
 
\section{Preliminaries}
\label{section:prelims}  

We assume familiarity with basic concepts in parameterized complexity theory, including the notions of \emph{fixed-parameter tractability} and para\NP-hardness~\cite{Cygan2015}.
For two functions $f_1 \colon A \to B$ and $f_2 \colon C \to D$ with $f_1(x) = f_2(x)$ for all $x \in A \cap C$, we define $f_1 \cup f_2 \colon A \cup C \to B \cup D$ with
$$(f_1 \cup f_2)(x) = \begin{cases}
  f_1(x), & \text{if $x \in A$}\\
  f_2(x), & \text{otherwise}
\end{cases}$$

Moreover, we sometimes express bounds on functions using the Knuth notation $\uparrow\uparrow$ where for an integer $z$, $2\uparrow\uparrow z$ represents an exponential tower of $2$'s of height $z$.
For a positive integer $L$, we write $\mathbb Z_L=\mathbb Z/L\mathbb Z$ for the integers modulo $L$.

\subparagraph{Graphs, Drawings, and Embeddings.}
We assume familiarity with standard graph terminology~\cite{Die.GT4.2012}.
       The \emph{vertex integrity $\vi G$} of a graph $G$ is the smallest integer $\psi$ such that there exists a set $X \subseteq V(G)$ for which the sum of $|X|$ plus the maximum size of a connected component of $G - X$ is at most $\psi$.
 We refer to $X$ as a \emph{witness} of $\vi G$.

A \emph{rotation system} of $G$ assigns to each vertex a clockwise circular order of its incident edges.
A \emph{drawing} $\Gamma$ of $G$ assigns each vertex $v \in V(G)$ to a point in the plane and each edge $uv \in E(G)$ to a Jordan arc between $\Gamma(u)$ and $\Gamma(v)$.
The drawing $\Gamma$ is \emph{planar} if no two such arcs intersect at an interior point.
A planar drawing partitions the plane into regions called \emph{faces}.
Moreover, it induces a rotation system of $G$ and \emph{relative positions}, i.e., an assignment of each connected component to a face of every other connected component.
A \emph{planar embedding} $\mathcal E$ of $G$ is an equivalence class of planar drawings with the same rotation system and relative positions.
For a vertex $v$ of $G$, we let $\mathcal E(v)$ denote the circular order $\mathcal E$ induces for the edges incident to $v$.
We mostly consider planar embeddings of connected graphs, which are entirely defined by a rotation system.
A rotation system of a connected graph $G$ is \emph{planar} if it is the rotation system induced by some planar drawing of~$G$.

\subparagraph{A Characterization for Planar Rotation Systems.} 
Let $G$ be a connected graph and let $C$ be a simple cycle of $G$.
A \emph{bridge} of $C$ is either a chord or a connected component of $G - C$ together with the edges connecting it to $C$.
 Let $B$ be a bridge of $C$.
A vertex of $C$ is a \emph{fixture vertex} of $B$ if it is incident to an edge of $B$.
An edge of $B$ is a \emph{fixture edge} if it is incident to a fixture vertex of $B$.
A \emph{fixture} of $B$ is a pair $(v, e)$, where $e$ is a fixture edge of $B$ and $v$ is a fixture vertex of $B$ incident to~$e$.
If $B$ consists of a single edge, note that its edge $e$ belongs to two fixtures of $B$.
We assume that $C$ is oriented and thus a rotation system $\mathcal E$ of $G$ assigns each fixture $a =(v, e)$ of $B$ to one side of $C$: If $e$ appears after the incoming and before the outgoing edge of $C$ in the clockwise circular order $\mathcal{E}(v)$ we say that $a$ lies \emph{inside of} $C$, otherwise it lies \emph{outside of} $C$.

We use the following characterization for planarity of rotation systems based on fixtures of cycles.
We remark that this characterization is similar to the characterization of planar graphs used by Lov{\'{a}}sz et al.~\cite{LPS1997} and in the planarity test by Auslander and Parter~\cite{Auslander1961}.
Moreover, Mohar and Thomassen~\cite{MT2001} prove the statement for more general surfaces.
For completeness, we give a self-contained elementary proof in the appendix.

\begin{restatable}\restateref{lm:planarityCharacterization}{lemma}{lemmaPlanarityCharacterization}
\label{lm:planarityCharacterization}
    Let $G$ be a connected graph and let $\mc E$ be a rotation system of $G$.
    Then $\mc E$ is planar if and only if, for every simple cycle $C$ of $G$, every bridge $B$ of $C$, and every pair of fixtures $a_1 = (v_1, e_1), a_2 =(v_2, e_2)$ of $B$, $a_1$ lies inside $C$ in $\mc E$ if and only if $a_2$ does.
\end{restatable}
\begin{prooflater}{plemmaPlanarityCharacterization}
It is clear that the condition is necessary; we thus only show that it is also sufficient.
Let $\mc E$ be a rotation system of $G$ that satisfies the condition of the lemma.
Note that $\mc E$ is planar if and only if (1) $\mc E$ induces a planar embedding for each block of $G$ and (2) for each cutvertex $v$ of $G$ and every pair of distinct split components $S_1, S_2$ of $v$, the edges of $S_1$ incident to $v$ are consecutive with respect to the edges of $S_2$ incident to $v$.

We first show that $\mc E$ satisfies statement (2).
If $S_1$ or $S_2$ only has a single edge incident to~$v$, the statement holds trivially.
Otherwise, assume for the sake of contradiction that $v$ has incident edges $f_1, f_2$ that belong to $S_1$ and edges $g_1, g_2$ that belong to $S_2$, such that these four edges appear in the clockwise order $f_1, g_1, f_2, g_2$ in $\mc E(v)$.
Because $f_1$ and $f_2$ belong to~$S_1$, there exists a simple cycle $C$ in $S_1$ that contains $v$, $f_1$, and $f_2$.
Moreover, since $g_1$ and $g_2$ both belong to $S_2$, they belong to the same bridge $B$ of $C$.
But the ordering $f_1, g_1, f_2, g_2$ places the fixtures $(v, g_1)$ and $(v, g_2)$ of $B$ on different sides of $C$, a contradiction.

To show that statement (1) holds, we thus now assume that $G$ is biconnected.
We prove the statement by induction on the number of edges of $G$.
We distinguish two cases, based on whether there exists a simple cycle of $G$ with more than one bridge or not.

\begin{figure}
  \centering
  \begin{subfigure}{0.3\textwidth}
    \centering
    \includegraphics[page=1]{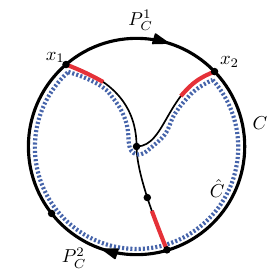}
    \caption{}
    \label{fig:treeBridge}
  \end{subfigure}
  \hfill
  \begin{subfigure}{0.3\textwidth}
    \centering
    \includegraphics[page=2]{figures/planarityCharacterization}
    \caption{}
    \label{fig:alternatingAttachments}
  \end{subfigure}
  \hfill
  \begin{subfigure}{0.3\textwidth}
    \centering
    \includegraphics[page=3]{figures/planarityCharacterization}
    \caption{}
    \label{fig:alternatingAttachments2}
  \end{subfigure}
  \caption{
  Illustration of the proof of \Cref{lm:planarityCharacterization}.
  \textbf{(a)} A cycle $C$ with a single bridge. Because this bridge is more than a path, there exists a second cycle $\hat C$ with more than one bridge. 
  \textbf{(b)-(c)} Two examples for a cycle $C$ with two bridges whose fixtures (red) alternate inside $C$. In both cases, there exists a cycle $\hat C$ that has a bridge with fixtures (green) that lie on different sides of $\hat C$.
  }
\end{figure}

\proofsubparagraph{Case 1:} Every simple cycle of $G$ has at most one bridge. 
We show that $G$ has very restricted structure.
\begin{claim}
    Let $G$ be a biconnected graph with at least two edges such that every simple cycle of $G$ has at most one bridge.
    Then $G$ is either a simple cycle or a simple cycle whose only bridge is a path.
\end{claim}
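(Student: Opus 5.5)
The plan is to take a simple cycle $C$ of $G$, which exists since $G$ is biconnected with at least two edges (ignoring parallel edges), and analyse its unique bridge. If $G = C$ we are done. Otherwise $C$ has exactly one bridge $B$. Because $G$ is biconnected, $B$ has at least two fixture vertices on $C$; otherwise its single fixture vertex would be a cutvertex. The goal is to show that $B$ is a path between two vertices of $C$, i.e., that $B$ is a chord or a path whose internal vertices all have degree $2$ and lie off $C$.

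First, I would pick two fixture vertices $u \neq w$ of $B$ on $C$ and a path $P$ through $B$ connecting them, with all internal vertices outside $C$. This path exists since $B$ is connected apart from $C$, or $B$ is a chord. The path $P$ splits $C$ into two $u$–$w$ paths $C_1$ and $C_2$. Both have length at least one, and at least one of them, say $C_1$, has an internal vertex unless $C$ is a digon; I would handle that degenerate case separately. Next, consider the cycle $\hat C = C_2 \cup P$. The internal vertices and edges of $C_1$ are not on $\hat C$ and form a bridge of $\hat C$. Any part of $B$ not on $P$ must belong to some bridge of $\hat C$. If it attaches to $\hat C$ without passing through the interior of $C_1$, it is a second bridge of $\hat C$, which contradicts the hypothesis. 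So every remaining piece of $B$ must connect to the interior of $C_1$. However, $B$ is a bridge of $C$ that is disjoint from $C$ apart from its fixtures, so $B$ can touch $C_1$ only at fixture vertices. This means any additional part of $B$ beyond $P$ (extra edges at $u$ or $w$, branches off internal vertices of $P$, or further fixtures on $C$) would either create a second bridge of $\hat C$ or hit an interior vertex of $C_1$.

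The main obstacle is the case where $B$ has a third fixture vertex $x$ on the interior of $C_1$. This piece would then merge with the $C_1$-bridge of $\hat C$ rather than forming a new bridge. To handle it, I would choose $u$ and $w$ as consecutive fixture vertices of $B$ along $C$, so that the interior of one arc, say $C_1$, contains no fixture of $B$. With this choice the bridge of $\hat C$ containing $C_1$'s interior is exactly that interior together with its two end edges. Any extra structure of $B$ off $P$ (including paths to other fixtures, which lie on $C_2$) then forms a separate bridge of $\hat C$, giving the contradiction.

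If the interior of $C_1$ is empty, i.e., $C_1$ is a single edge, I would swap the roles of the two arcs, using that $C$ has length at least three. If the interiors of both arcs are empty, I would use the symmetric cycle $C_1 \cup P$ instead. Finally, I would conclude that $B = P$, so $G$ is $C$ plus a single path between two of its vertices.
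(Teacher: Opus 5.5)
Your core argument is the paper's own. You choose $u,w$ as consecutive fixture vertices of the unique bridge $B$, so the interior of the arc $C_1$ contains no fixture of $B$; since $B$ is the only bridge of $C$, those interior vertices have degree $2$ in $G$. You then form $\hat C = C_2 \cup P$, observe that $C_1$ is an entire bridge of $\hat C$, and conclude that nothing else can exist, so $G = C \cup P$ and $B = P$.

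The one step that would fail is your special handling of the case where $C_1$ is a single edge. Swapping the roles of the arcs undoes the property you just arranged. The interior of $C_2$ may contain further fixtures of $B$, and these then merge with the $C_2$-part into a single bridge of $C_1 \cup P$, so no contradiction arises. For example, let $G$ be a wheel with rim $C$. Every rim vertex is a fixture of the hub's bridge, so consecutive fixtures are always adjacent on $C$. The swapped cycle $C_1 \cup P$ is then a triangle through the hub with exactly one bridge.

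This case needs no special treatment, though. If $C_1$ is the single edge $uw$, it is a chord of $\hat C = C_2 \cup P$, and a chord is by definition already a complete bridge. Your main argument therefore applies verbatim; this is also how the paper handles it. In the wheel example, $\hat C$ has the chord $uw$ together with the remaining spokes as chords, hence several bridges, which gives the contradiction. Simply delete the swapping paragraph. The digon remark is irrelevant for simple graphs.
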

\begin{claimproof}
    Because $G$ is biconnected and has at least two edges, $G$ contains at least one simple cycle $C$.
    If $C$ has no bridges, then we have $G = C$.
    Otherwise, let $B$ denote the unique bridge of $C$.
    Due to biconnectivity, $B$ has at least two distinct fixture vertices.
    Choose two fixture vertices $x_1 \neq x_2$ of $B$ such that no other fixture vertex of $B$ lies between $x_1$ and $x_2$ along the orientation of $C$; see \Cref{fig:treeBridge}.
    Let $P^1_C$ be the $x_1x_2$-path that follows the orientation of $C$ and let $P^2_C$ be the other $x_1x_2$-path of $C$.
    Moreover, let $P_B$ be an arbitrary $x_1x_2$-path in $B$ that is disjoint from $C$, except for its endpoints.
    Consider the cycle $\hat C = P^2_C \cup P_B$.
    By the choice of $x_1, x_2$ and $P^1_C$, the interior vertices of $P^1_C$ have degree 2 in $G$.
    Hence $P^1_C$ constitutes an entire bridge of $\hat C$.
    Since $\hat C$ has no other bridges by assumption, we have $G = \hat C \cup P^1_C = P^2_C \cup P_B \cup P^1_C = C \cup P_B$.
    Because $G = C \cup B$, $B$ thus only consists of the path $P_B$, which concludes the proof.
\end{claimproof}

If $G$ has fewer than two edges or $G$ is a simple cycle, $\mc E$ is trivially planar.
By the claim, $G$ otherwise contains a simple cycle $C$ whose only bridge $B$ is a path.
Note that $\mc E$ trivially induces planar embeddings for $C$ and $B$.
Since $\mc E$ additionally places the two fixtures of $B$ on the same side of $C$ by assumption, we thus have that $\mc E$ is a planar embedding of $G = C \cup B$.

\proofsubparagraph{Case 2:} There exists a simple cycle $C$ of $G$ with at least two bridges.
Recall that, by assumption, the fixtures of each bridge of $C$ lie on the same side of $C$ in $\mc E$.
Let $\mc I$ and $\mc J$ denote the bridges whose fixtures lie inside and outside of $C$, respectively.
If $\mc I$ and $\mc J$ are both non-empty, we obtain by induction that $\mc E$ induces planar embeddings of the smaller subgraphs $C \cup \bigcup_{B \in \mc I} B$ and $C \cup \bigcup_{B \in \mc J} B$ and thus $\mc E$ also induces a planar embedding of $G$.
In the following, we thus assume without loss of generality that $\mc J = \emptyset$, i.e., the fixtures of all bridges of $C$ lie inside of $C$ in $\mc E$.

Note that the chosen orientation of $C$ defines a circular order of all fixtures inside $C$: fixtures at distinct vertices are ordered by the orientation of $C$, fixtures at a common vertex $v$ are ordered by $\mc E(v)$, from the incoming edge of $C$ towards the outgoing edge.
We first show that the fixtures of one of the bridges appear consecutively in this circular order.
\begin{claim}
    There exists a bridge $B \in \mc I$ such that the fixtures of $B$ appear consecutively inside $C$.
\end{claim}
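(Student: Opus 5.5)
Consider the circular sequence of all fixtures inside $C$, and for each bridge $B\in\mc I$ the set $F_B$ of positions of its fixtures. The plan is to pick a bridge $B$ and an interval $\sigma_B$ of this sequence that is minimal in a suitable sense, and then to show that $\sigma_B$ contains only fixtures of $B$.

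\textbf{Choosing $B$ and $\sigma_B$.} Since $C$ has at least two bridges, every bridge misses some positions of the circular sequence. For each $B$, take a maximal run of consecutive positions not occupied by fixtures of $B$. Its complement, call it $\sigma_B$, is a linear interval. Its two endpoints are fixtures of $B$, and every fixture of $B$ lies in $\sigma_B$. Now choose $B$ and such a gap so that $\sigma_B$ has the fewest positions. If $\sigma_B$ contains only fixtures of $B$, we are done.

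\textbf{Finding a nested bridge.} Otherwise some fixture $f$ of another bridge $B'$ lies strictly inside $\sigma_B$.
\begin{itemize}
\item If all fixtures of $B'$ lie inside $\sigma_B$, then the gap of $B'$ that contains the complement of $\sigma_B$ yields a strictly shorter interval $\sigma_{B'}$. This contradicts minimality.
\item Hence some fixture $f'$ of $B'$ lies outside $\sigma_B$.
\end{itemize}

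\textbf{Deriving the contradiction.} We now have an alternation: between the two endpoint fixtures $b_1,b_2$ of $B$ lie both $f$ (inside $\sigma_B$) and $f'$ (outside). Around $C$ the order is $b_1, f, b_2, f'$. We turn this into a violation of the lemma's hypothesis for some other cycle $\hat C$, which is the situation in Figures~(b)--(c).
\begin{enumerate}
\item Take a path $P_B$ inside $B$ joining the fixture vertices of $b_1$ and $b_2$, starting with edge $e_1$ and ending with edge $e_2$. If $b_1,b_2$ are at the same vertex, $P_B$ is a cycle through that vertex.
\item Form $\hat C$ from $P_B$ and the arc of $C$ between the fixture vertices of $b_1,b_2$ that contains the vertex of $f$.
\item Take the arc of $C$ containing $f'$, together with a path in $B'$ from the fixture vertex of $f$ to that of $f'$. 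This connects $f$ and $f'$ through a single bridge of $\hat C$.
\item Show, by reading the rotations at the vertices where $B'$ attaches, that the two fixtures of this bridge lie on opposite sides of $\hat C$. At a vertex of the $C$-arc, the side is decided by whether the edge lies inside $C$. At shared endpoints, it is decided by the order in $\mc E(v)$ among $e_i$, the $C$-edges and the fixture edge.
\end{enumerate}
This contradicts the hypothesis of the lemma, so $\sigma_B$ contains only fixtures of $B$.

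\textbf{Main obstacle.} I expect the hard part to be the degenerate configurations. These include: fixtures sharing a vertex with $b_1$ or $b_2$, so that the side must be read from $\mc E(v)$ relative to $e_1,e_2$ rather than from the cycle arc; $f$ and $f'$ at the same vertex; and chord bridges whose single edge is two fixtures. For each, I would check carefully which side of $\hat C$ the fixture edges fall on, possibly after reorienting $\hat C$. One could instead induct on the number of fixtures, but the direct minimal-interval argument seems cleanest.
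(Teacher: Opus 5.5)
Your proposal follows the paper's route: the paper likewise refutes two bridges of $\mc I$ with alternating fixtures by closing a path of $B$ with the arc of $C$ through $f$ and exhibiting a bridge of the new cycle $\hat C$ with fixtures on both sides, and it passes from ``no alternation'' to ``some bridge is consecutive'' in a single line, which your minimal-$\sigma_B$ argument makes explicit. When you carry out step~4, note that if $f'$ lies in the interior of the complementary arc it is not a fixture of $\hat C$ at all, so the fixture to compare with $f$ is the end edge of that arc at the vertex of $b_2$ (on the other side of $\hat C$ from $f$ because $b_2$, like $f$, lies inside $C$); the case where $b_1,b_2$ share a vertex but $f,f'$ do not is easiest to handle by swapping the roles of $B$ and $B'$.
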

\begin{claimproof}
    We first show that the fixtures of two distinct bridges of $\mc I$ do not alternate around $C$.
    Assume for the sake of contradiction that there exist fixtures $a_1 = (v_1, e_1), a_2 = (v_2, e_2)$ of a bridge $B \in \mc I$ and fixtures $a'_1 = (v'_1, e'_1), a'_2 = (v'_2, e'_2)$ of a different bridge $B' \in \mc I$ that appear in the circular order $a_1, a'_1, a_2, a'_2$ inside $C$; see \Cref{fig:alternatingAttachments,fig:alternatingAttachments2}.

    First assume $v_1 \neq v_2$.
    Let $P_B$ denote an arbitrary $v_2v_1$-path of $B$ that starts with $a_2$, ends with $a_1$, and is otherwise disjoint from $C$.
    Let $\hat C$ denote the oriented cycle obtained by concatenating $P_B$ and the $v_1v_2$-subpath $P^1_C$ of $C$ that follows the orientation of $C$; see \Cref{fig:alternatingAttachments}.
    
    Because $e'_1$ and $e'_2$ both belong to $B'$, they also belong to a common bridge $\hat B$ of $\hat C$.
    Moreover, since $a_1'$ lies between $a_1$ and $a_2$ in the circular order inside $C$, $v_1'$ belongs to $P^1_C$ and thus $a_1'$ is a fixture of $\hat B$ that lies inside $\hat C$ in $\mc E$.
    We distinguish whether $v_2' \in \{v_1, v_2\}$ or not.
    
    If $v_2' \in \{v_1, v_2\}$, then $v_2'$ belongs to $\hat C$ and thus $a_2'$ is also a fixture of $\hat B$. 
    Since $a_2'$ lies between $a_2$ and $a_1$ in the circular order inside $C$, $a_2'$ lies outside of $\hat C$.
    Hence the fixtures $a_1'$ and $a_2'$ lie on different sides of $\hat C$ in $\mc E$, a contradiction.
    
    If $v_2' \notin \{v_1, v_2\}$, then the second $v_1v_2$-path $P^2_C \neq P^1_C$ of $C$ also belongs to $\hat B$.
    Let $\hat e$ denote the edge of $P^2_C$ incident to $v_2$ and observe that $\hat a = (v_2, \hat e)$ is a fixture of $\hat B$.
    Because $a_2$ lies inside $C$, $\hat a$ lies outside of $\hat C$ and we again obtain a contradiction.

    Finally, consider the case $v_1 = v_2$.
    If $v_1' \neq v_2'$, we can swap the labels of $B$ and $B'$ and their corresponding fixtures (which does not affect the circular order of the four fixtures inside $C$) and we again obtain the other case.
    We thus assume $v_1' = v_2'$.
    Since fixtures at a common vertex appear consecutively in the circular order, we therefore have $v_1 = v_1' = v_2 = v_2'$.
    Pick $\hat C$ as an arbitrary simple cycle of $B$ that contains $a_1$ and $a_2$ and is otherwise disjoint from $C$.
    Then $a_1'$ and $a_2'$ are fixtures of a bridge of $\hat C$ that lie on different sides of $\hat C$; see \Cref{fig:alternatingAttachments2}.
    We thus again obtain a contradiction.

    Hence the fixtures of two distinct bridges do not alternate inside $C$.
    Therefore, there exists a bridge $B$ whose fixtures appear consecutively inside $C$.
\end{claimproof}

  Let $B$ be a bridge of $\mc I$ whose fixtures appear consecutively inside $C$.
Because $|\mc I| \geq 2$, we obtain by induction that $\mc E$ induces planar embeddings $\mc E_1, \mc E_2$ of $C \cup B$ and $C \cup \bigcup_{B' \in (\mc I \setminus \{B\})} B'$, respectively.
Since the fixtures of $B$ are consecutive inside $C$, $\mc E$ places them in a single face  of $\mc E_2$.
Hence $\mc E$ is a planar embedding of $G$. 
\end{prooflater}

\subparagraph{Angles and Angle-Assignments.}

We will later have to merge multiple circular orderings into one, given specific information on how these orderings should interleave.
We formalize this using \emph{angles} and \emph{angle-assignments}.

Let $A, B$ be disjoint sets with $|A| \geq 2$ and let $\sigma$ be a circular ordering of $A$.
An \emph{angle} of~$\sigma$ is an ordered pair $\alpha = (a_1, a_2)$ of distinct elements in~$A$.
Note that $\alpha$ uniquely defines a contiguous linear subsequence $\sigma[\alpha]$ of $\sigma$ that starts with $a_1$ and ends with $a_2$.
If $\sigma[\alpha]$ only consists of $a_1$ and $a_2$ (i.e., $a_2$ directly succeeds $a_1$ in $\sigma$), we refer to $\alpha$ as a \emph{direct angle}.
 We say that an element $a \in A \setminus \{a_1, a_2\}$ \emph{lies inside} $\alpha$ if $a$ is contained in $\sigma[\alpha]$.
Moreover, an angle $\alpha'$ of $\sigma$ \emph{lies inside} $\alpha$ if $\sigma[\alpha']$ is a subsequence of $\sigma[\alpha]$.
 
\begin{figure}
  \centering
  \begin{subfigure}{0.3\textwidth}
    \centering
    \includegraphics[page=1,scale=0.9]{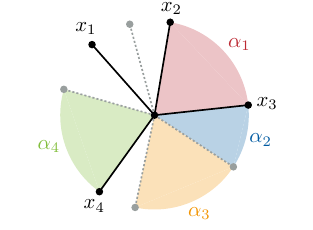}
    \caption{}
    \label{fig:anglesA}
  \end{subfigure}
  \hfill
  \begin{subfigure}{0.3\textwidth}
    \centering
    \includegraphics[page=2,scale=0.9]{figures/angles}
    \caption{}
    \label{fig:anglesB}
  \end{subfigure}
  \hfill
  \begin{subfigure}{0.3\textwidth}
    \centering
    \includegraphics[page=3,scale=0.9]{figures/angles}
    \caption{}
    \label{fig:anglesC}
  \end{subfigure}
  \caption{\textbf{(a)} A circular order $\sigma_1$ with four direct angles $\alpha_1, \alpha_2, \alpha_3, \alpha_4$ highlighted.
    \textbf{(b)} A circular order $\sigma_2$ that shares elements $X = \{x_1, x_2, x_3, x_4\}$ with $\sigma_1$. 
    A direct angle-assignment $\varphi$ from the elements of $\sigma_2$ not in $X$ to $\sigma_1$ is illustrated using colors.
         \textbf{(c)} The unique corresponding circular ordering that extends $\sigma_1$ and $\sigma_2$ and satisfies $\varphi$.
  }
  \label{fig:angles}
\end{figure}

An \emph{angle-assignment} from $B$ to $\sigma$ is a function $\varphi\colon B \to \{(a_1,a_2) \in A^2 \mid a_1\neq a_2\}$ that assigns each element of $B$ to an angle of $\sigma$.
If the image of $\varphi$ only consists of direct angles of~$\sigma$, we say that $\varphi$ is a \emph{direct angle-assignment}.
An angle-assignment $\varphi'$ from $B$ to a circular ordering $\hat \sigma \supseteq \sigma$ is a \emph{refinement of $\varphi$} if $\varphi'(b)$ lies inside $\varphi(b)$ in $\hat \sigma$ for all $b \in B$.
The refinement is \emph{tidy} if additionally $\varphi(b_1) = \varphi(b_2)$ implies $\varphi'(b_1) = \varphi'(b_2)$ for all $b_1, b_2 \in B$.
 A circular ordering~$\hat \sigma \supseteq \sigma$ that contains $B$ \emph{satisfies} $\varphi$ if every $b \in B$ lies inside $\varphi(b)$ in $\hat \sigma$; see \Cref{fig:angles}.
Conversely, every circular ordering $\hat \sigma \supseteq \sigma$ that contains $B$ \emph{induces} a unique direct angle-assignment from $B$ to $\sigma$ that it satisfies.
 
We show that, for two circular orderings sharing an element and a direct angle-assignment from one to the other, there is at most one extension that satisfies the~angle-assignment.
 
\begin{restatable}\restateref{lm:merges}{lemma}{lemmaMerges}
    \label{lm:merges}
    Let $X, A_1, A_2$ be pairwise disjoint sets with $|X| \geq 1$ and $|X \cup A_1| \geq 2$.
    Let $\sigma_1$ and $\sigma_2$ be circular orderings of $X \cup A_1$ and $X \cup A_2$, respectively,
         and let $\varphi$ be a direct angle-assignment from $A_2$ to $\sigma_1$.
    Then there exists at most one circular ordering of $X \cup A_1 \cup A_2$ that extends $\sigma_1$ and $\sigma_2$ and that satisfies $\varphi$.
                                                  \end{restatable}
\begin{prooflater}{plemmaMerges}
    Let $\sigma_{1,2}$ be a circular ordering of $X \cup A_1 \cup A_2$ that extends $\sigma_1$ and $\sigma_2$ and that satisfies $\varphi$.
    Consider a direct angle $\alpha$ of $\sigma_1$ that is in the image of $\varphi$.
    Because $\sigma_{1,2}$ satisfies $\varphi$, $\alpha$ contains precisely the elements of $\varphi^{-1}(\alpha)$ in $\sigma_{1,2}$.
    It is therefore necessary that the elements of $\varphi^{-1}(\alpha)$ are consecutive in $\sigma_2$.   
    Because $|X| \geq 1$, $\sigma_2$ contains at least one element that does not belong to $\varphi^{-1}(\alpha)$ and thus $\sigma_2$ induces a unique linear ordering $\sigma_\alpha$ of the consecutive elements $\varphi^{-1}(\alpha)$.
    Since $\sigma_{1,2}$ extends $\sigma_2$, the elements $\varphi^{-1}(\alpha)$ appear in the linear order $\sigma_\alpha$ in the angle $\alpha$ in $\sigma_{1,2}$.
    For each direct angle that is in the image of $\varphi$, the order of the elements inside the angle is thus uniquely defined by $\sigma_2$.
    Hence $\sigma_{1,2}$ is unique.
\end{prooflater}

\subparagraph{\SEFE.}

Let $\mc I = \Instance$ be an instance of \SEFE.
Recall that $V(G_1) = \dots = V(G_k) \eqqcolon V$ and that all pairs of input graphs share the same subgraph.
We therefore let $G_\cap \coloneqq (V, \bigcap_{i \in [k]}E(G_i))$ denote the \emph{shared graph}.
Moreover, we let $\GU \coloneqq (V, \bigcup_{i \in [k]}E(G_i))$ denote the \emph{union graph}.
 For a subgraph $H_\cup$ of $\GU$, we let $H_i \coloneqq (V(H_\cup), E(G_i) \cap E(H_\cup))$ for $i \in [k]$ and $H_\cap \coloneqq (V(H_\cup), E(G_\cap) \cap E(H_\cup))$ denote its restriction to individual input graphs and the shared graph, respectively.
We let $|\mc I| = |V(\GU)|$ denote the \emph{size} of $\mc I$.

Jünger and Schulz~\cite{DBLP:journals/jgaa/JungerS09} showed that $\mc I$ admits a simultaneous drawing if and only if there exists a \emph{simultaneous embedding} $\embU = (\emb_1, \dots, \emb_k)$ of $\mc I$, that is, a sequence of planar embeddings of $G_1, \dots, G_k$ that induce the same embeddings on the shared graph $G_\cap$\footnote{Jünger and Schulz~\cite[Theorem 4]{DBLP:journals/jgaa/JungerS09} show this equivalence for $k = 2$ input graphs, but it is well-known that the same construction can also be successively applied to additional input graphs.}.

An instance $\mc I' = (G_1', \dots, G_k')$ of \SEFE is a \emph{subinstance} of $\mc I$ if $G_i'$ is a subgraph of $G_i$ for all $i \in [k]$.
In this case, we also refer to $\mc I$ as a \emph{superinstance} of $\mc I'$.
An isomorphism $f$ between \GU and $\GU'$ is a \emph{simultaneous isomorphism} if it is an isomorphism between $G_i$ and $G'_i$ for all $i \in [k]$.
If such an isomorphism exists, we say that the two instances are \emph{simultaneously isomorphic}.

We say that $\mc I$ is \emph{exclusive-connected} if $G_i$ is connected for all $i \in [k]$.
Bläsius and Rutter~\cite{BlasiusR15} showed that, for every instance of \SEFE with $k = 2$, an equivalent set of exclusive-connected instances can be computed in linear time.
The same construction works for $k \geq 3$ and with the following lemma, we show that the vertex integrity plus the number of input graphs of the resulting instance remains bounded.

\begin{restatable}\restateref{lm:exclconnected}{lemma}{lemmaExclConnected}
  \label{lm:exclconnected}
  Let $\mc I = \Instance$ be an instance of \SEFE.
  In polynomial time, we can compute a set of equivalent exclusive-connected instances of \SEFE, where each instance has $k$ input graphs and a union graph with vertex integrity in $O(k \cdot \vi\GU)$.
  
\end{restatable}

\begin{prooflater}{plemmaExclConnected}
        Note that multiple connected components of \GU can be decomposed into independent instances without increasing the vertex integrity.
  In the following, we thus assume that \GU is connected.
  Consider an edge $e = v_1v_2$ whose endpoints belong to different connected components of $G_i$ for some $i \in [k]$.
  Consider the instance $\hat {\mc I} = (\hat G_1, \dots, \hat G_k)$ obtained by introducing a new vertex $v$, adding the edge $v_1v$ exclusively to $G_i$, and the edge $vv_2$ as a shared edge.
  Note that this decreases the number of connected components of $G_i$ by one, while the number of connected components in all other graphs is unaffected.
  
  Clearly, any solution of the new instance also induces a solution of $\mc I$.
  For the other direction, the proof of Bläsius and Rutter~\cite[Lemma 1]{BlasiusR15} for $k = 2$ input graphs can also be applied verbatim for $k \geq 3$; we only briefly sketch the idea here and refer to their proof for details.
  Roughly speaking, for a given simultaneous embedding of $\mc I$, the existence of edge $e = v_1v_2$ in $\GU$ implies that $v_1$ and $v_2$ lie on the boundary of a common face of the shared graph.
  The fact that $v_1$ and $v_2$ belong to different connected components $H_i^1$ and $H_i^2$ of $G_i$ thus allows us to change the embeddings of $H_i^1$ and $H_i^2$ such that $v_1$ and $v_2$ subsequently share a face, without changing the embedding of the shared graph (see also Figure~2 in~\cite{BlasiusR15}).
  Hence $v$ and its two incident edges $v_1v$ and $vv_2$ can be embedded into this face.
  For all other graphs, $v$ has degree 1 and its incident edge can therefore be placed at the same position as in $G_i$, which yields a simultaneous embedding.
  Hence the two instances are equivalent.

  For $i \in [k]$, let $K_i \subseteq E(\GU) \setminus E(G_i)$  be the edges of a spanning tree of the graph obtained from $\GU$ by contracting each component of $G_i$ into a single vertex.
  Let $\mathcal I' = (G_1', \dots, G_k')$ be the instance obtained from $\mc I$ by inserting the gadget described above once for each $i \in [k]$ and each $e \in K_i$.
  Note that, after each gadget-insertion, the remaining tree edges of $K_i$ still join two different connected components in the $i$-th input graph.
  Moreover, as the gadget corresponds to a degree-1 vertex in all other input graphs, the connected components of all other graphs are unaffected.
  Hence $\mc I'$ is exclusive-connected and equivalent to $\mc I$.
  Moreover, $\GU'$ can be obtained from $\GU$ by inserting one or more parallel subdivided copies of each edge in $K \coloneqq \bigcup_{i \in [k]} K_i$.
  
  It remains to show that the vertex integrity of $\mc I'$ remains bounded.
  Let $X \subseteq V(\GU)$ be a witness of $\vi\GU$.
  Let $X' \subseteq V(\GU')$ be the set of vertices that consists of $X$ and all new degree-2 vertices whose neighbors both lie in $X$.
  We show that $X'$ witnesses $\vi{\GU'} \leq \vi\GU + (\vi\GU - 1) \cdot k$.

  Let $C'$ be a connected component of $\GU' - X'$.
  Note that $C \coloneqq C' \cap V(\GU)$ is a connected component of $\GU - X$, as the two neighbors of every new vertex are adjacent in $\GU$ and every new vertex outside $X'$ has a neighbor in $V(\GU) \setminus X$.
  Moreover, $\GU'[X' \cup C']$ can be obtained from $\GU[X \cup C]$ by inserting one or more gadgets for each edge in $K \cap E(\GU[X \cup C])$ and each such insertion decreases the number of connected components in $G_i[X \cup C]$.
  Since $|X \cup C| \leq \vi{\GU}$, this can happen at most $\vi{\GU} - 1$ times for each input graph.
  Hence $|X' \cup C'| \leq \vi{\GU} + k \cdot (\vi{\GU} - 1) \in O(k \cdot \vi{\GU})$.
  If $X' = V(\GU')$, the same argument with $C' = C = \emptyset$ yields the same upper bound for $|X'|$ alone.
  Hence $\vi{\GU'} \in O(k \cdot \vi\GU)$.
\end{prooflater}

We will therefore mostly work with exclusive-connected instances of \SEFE.
For an instance $\mc I$ of \SEFE, a \emph{simultaneous rotation system} of $\mc I$ is a sequence $\embU = (\emb_1, \dots, \emb_k)$ of rotation systems of $G_1, \dots, G_k$ (which are not necessarily planar or consistent on shared edges).
For exclusive-connected instances, every simultaneous embedding is uniquely defined by a corresponding simultaneous rotation system.

\section{A Fixed-Parameter Algorithm for Vertex Integrity and $\boldsymbol{k}$}
\label{section:vertex-integrity}

In this section, we show that \SEFE is FPT parameterized by the vertex integrity of the union graph $\GU$ plus the number of input graphs $k$.
Let $\mc I = \Instance$ be an instance of \SEFE and let $X$ be a set of vertices of \GU witnessing $\vi\GU$.
As stated in the introduction, \Cref{fig:noextension} shows that the standard argumentation for vertex integrity parameterizations does not work, as additional copies of (simultaneously) isomorphic components of $\GU - X$ cannot necessarily be added to a solution, no matter how many copies one keeps.

To tackle this issue, we employ the \emph{Ramsey pruning} technique that was recently introduced by Depian et al.~\cite{DFGS.LLR.2025}.
Roughly speaking, we first modify the input and $X$ in such a way that we can partition \GU into $X$ and a set of bounded-size \emph{blobs} that are pairwise simultaneously isomorphic. 
We arbitrarily order all blobs and if $\GU$ is larger than some function $f(\vi\GU, k)$, we can then use Ramsey-type arguments to show that any solution of the instance contains a (sufficiently large) subset of blobs that are \emph{$2$-consistent}, i.e., all increasing ordered pairs of blobs induce the same embedding in every input graph in this solution (similar to \Cref{fig:noextension}(c)).
 We then show that, if we restrict the solution to these $2$-consistent blobs, we can add an arbitrary number of additional blobs to the solution while preserving $2$-consistency.
 It thus suffices to keep $f(\vi\GU,k)$ blobs, as additional blobs can then always be reinserted.
Since the blobs have bounded size, this yields an equivalent bounded-size instance.

\begin{figure}
\centering
\includegraphics[scale=0.9]{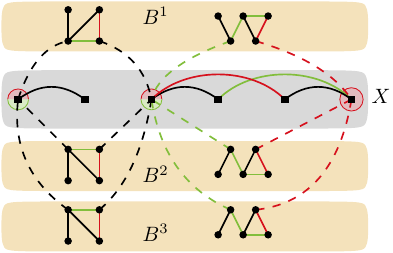}
\caption{
    An instance $\mc I$ of \SEFE on two graphs (green and red; shared edges are black) with blob-partition $\mc K = (X \dcup B^1 \dcup B^2 \dcup B^3, (\iso^{j_1\rightarrow j_2})_{j_1,j_2 \in [3]})$ with an extended modulator.
    Attachment edges are dashed and attachment vertices of the two graphs are highlighted with colored circles.
}
\label{fig:blob-partition}
\end{figure}

In the following, we first define these partitions; see also \Cref{fig:blob-partition}.
For an isomorphism~$f$ and a subset $X$ of its domain, we say that $f$ is \emph{$X$-preserving} if $f(x) = x$ for all $x \in X$.

\begin{definition}[Blob-partition]
  \label{def:blob-partition}
  Let $\mc I = \Instance$ be an instance of \SEFE.
  A \emph{blob-partition} of $\mc I$ is a pair $\mc K = (X \dcup B^1 \dcup \dots \dcup B^l, (\iso^{j_1\rightarrow j_2})_{j_1,j_2 \in [l]})$, where
  \begin{enumerate}
         \item $X \dcup B^1 \dcup \dots \dcup B^l = V(\GU)$,
    \item $N_{\GU}(B^j) \subseteq X$ for all $j \in [l]$,
    \item for all $j_1, j_2 \in [l]$, $\iso^{j_1\rightarrow j_2}$ is an $X$-preserving simultaneous isomorphism between $\GU[X \cup B^{j_1}]$ and $\GU[X \cup B^{j_2}]$, and
    \item for all $j_1, j_2, j_3 \in [l]$, $\iso^{j_1 \rightarrow j_3} = \iso^{j_2 \rightarrow j_3} \circ \iso^{j_1 \rightarrow j_2}$.
  \end{enumerate}
\end{definition}

 We refer to $X$ as the \emph{modulator} of $\mc K$ and to $B^1$, \dots, $B^l$ as \emph{blobs}.
Note that we allow the blobs to be empty sets, in which case $X = V(\GU)$.
We let $l$ denote the \emph{size} of $\mc K$.
We classify each edge $e$ of \GU as a \emph{modulator edge}, an \emph{attachment edge}, or a \emph{blob edge}, based on whether $e$ has two, one, or zero endpoints in $X$, respectively.
For a set $S \subseteq [l]$, we let $\mc I^{S} = (G_1^S, \dots, G_k^S)$ denote the subinstance of $\mc I$ induced by vertices $X \cup \bigcup_{j \in S}B^j$.
We let $\mc K^S = (X \dcup \mathop{\dot\bigcup}_{j \in S} B^j,(\iso^{j_1 \rightarrow j_2})_{j_1,j_2 \in S})$ denote the corresponding blob-partition of $\mc I^S$.

Since the isomorphisms $(\iso^{j_1\rightarrow j_2})_{j_1,j_2 \in [l]}$ are $X$-preserving, we have $N_{\GU}(B^{j_1}) = N_{\GU}(B^{j_2}) \subseteq X$ for all $j_1, j_2 \in [l]$.
We let $\att(\mc K) = N_{\GU}(B^1) = \dots = N_{\GU}(B^l)$ denote the \emph{attachment vertices} of $\mc K$.
For $i \in [k]$, we let $\att_i(\mc K) = N_{G_i}(B^1) = \dots = N_{G_i}(B^l) \subseteq \att(\mc K)$ denote the attachment vertices that are incident to an attachment edge of $G_i$.
  We say that the modulator $X$ is \emph{extended} if 
 for each $i \in [k]$ and each $x \in \att_i(\mc K)$, $x$ is incident to at least one modulator edge in $G_i$.
This property will be crucial, as it guarantees that rotation systems of different blobs always have a common reference point.
We can easily show that exclusive-connectedness and extended modulators also translate to subsets of blobs.

\begin{restatable}\restateref{lm:connectedSubinstances}{lemma}{lemmaConnectedSubinstances}
    \label{lm:connectedSubinstances}
    Let $\mc I$ be an exclusive-connected instance of \SEFE on $k$ input graphs and let $\mc K$ be a blob-partition of $\mc I$ with an extended modulator $X$ and size $l$.
    Then for every non-empty $S \subseteq [l]$, $X$ is also extended in $\mc K^S$ and $\mc I^S$ is exclusive-connected.
\end{restatable}
\begin{prooflater}{plemmaConnectedSubinstances}
    Since the attachment vertices are the same in $\mathcal K$ and $\mathcal K^S$, $X$ remains extended in $\mc K^S$.

    Let $i \in [k]$ and let $u, v \in V(G^S_i)$.
    Since $G_i$ is connected, it contains a $uv$-path $P$.
    If $P$ visits $B^j$ for some $j \notin S$, observe that we can reroute the corresponding part of $P$ through an arbitrary other blob $B^{j'}$ for $j' \in S$ using $\iso^{j \rightarrow j'}$.
    Since $S \neq \emptyset$, we thus also find a $uv$-path in $G^S_i$.
    Hence $G^S_i$ is connected for each $i \in [k]$ and thus $\mc I^S$ is exclusive-connected.
\end{prooflater}

 Assume $\mathcal I$ is exclusive-connected and $X$ is extended.
Let $\embU = (\emb_1, \dots, \emb_k)$ be a simultaneous rotation system of $\mc I$.
For a set $S \subseteq [l]$, we let $\embU^S = (\emb^S_1, \dots, \emb^S_k)$ denote the restriction of $\embU$ to $\mc I^S$.
For $S \neq \emptyset$, $\mc I^S$ remains exclusive-connected by \Cref{lm:connectedSubinstances}. 
Hence the rotation systems $\emb^S_1, \dots, \emb^S_k$ define embeddings of $G^S_1, \dots, G^S_k$, as no relative positions are involved.
 
Let $m \in [l]$, let $r = (r_1, \dots, r_m)$, $ s = (s_1, \dots, s_m)$ be two strictly increasing sequences over~$[l]$.
Since the simultaneous isomorphisms $(\iso^{j_1\rightarrow j_2})_{j_1,j_2 \in [l]}$ are $X$-preserving, note that $\iso^{r \rightarrow  s} \coloneqq \iso^{r_1 \rightarrow s_1} \cup \dots \cup \iso^{r_m \rightarrow s_m}$ is well-defined and a simultaneous isomorphism between $\GU^{\{r_1, \dots, r_m\}}$ and $\GU^{\{s_1, \dots s_m\}}$.
For $i \in [k]$, we write $\emb_i^{\{s_1, \dots, s_m\}} = \iso^{r\rightarrow s}(\emb_i^{\{r_1, \dots, r_m\}})$ if 
$\emb_i^{\{s_1, \dots, s_m\}}(\iso^{r\rightarrow s}(x))= \iso^{r \rightarrow s}(\emb_i^{\{r_1, \dots, r_m\}}(x))$
holds for every $x \in V(\GU^{\{r_1, \dots, r_m\}})$. 

Let $S \subseteq [l]$, let $m \in [|S|]$, and let $r = (r_1, \dots, r_m)$ be the strictly increasing sequence induced by the $m$ smallest indices in $S$.
We say that $\embU^S$ is \emph{$m$-consistent} (with respect to $\mathcal K^S$) if, for every $i \in [k]$ and every strictly increasing sequence $s = (s_1, \dots, s_m)$ over $S$, we have $\emb_i^{\{s_1, \dots, s_m\}} = \iso^{r\rightarrow s}(\emb_i^{\{r_1, \dots, r_m\}})$.
In other words, any $m$ blobs of $\mathcal K^S$ look the same as the first $m$ blobs in $\embU^S$.
For $S = [l]$, we simply say $\embU$ is $m$-consistent with respect to $\mathcal K$.

Our goal is to prove the following lemma.
\begin{restatable}{lemma}{addBlob}
  \label{lm:addBlob}
  Let $\mc I = \Instance$ be an exclusive-connected instance of \SEFE.
  Let $\mc K$ be a blob-partition of $\mc I$ with an extended modulator $X$ and size $l \geq |X| + 4$.
  
  If there exists a simultaneous embedding $\embPU$ of $\mc I^{[l-1]}$ that is $2$-consistent with respect to $\mc K^{[l-1]}$, there  exists a simultaneous embedding $\embU$ of $\mc I$ that is $2$-consistent with respect to $\mc K$.
\end{restatable}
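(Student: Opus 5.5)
Given a $2$-consistent simultaneous embedding $\embPU$ of $\mc I^{[l-1]}$, we construct the rotation systems $\emb_1,\dots,\emb_k$ of $\mc I$ directly. We then verify three properties: planarity of each $\emb_i$ via \Cref{lm:planarityCharacterization}, agreement on the shared graph, and $2$-consistency.

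\textbf{Rotation at vertices of blobs.} For a vertex $v \in B^l$ we copy the rotation of $\iso^{1\rightarrow l}$ applied to the corresponding vertex of $B^1$. All edges at such a vertex lie in $X \cup B^l$, so this is well defined. Vertices in $B^1,\dots,B^{l-1}$ keep their rotation from $\embP_i$.

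\textbf{Rotation at attachment vertices.} Fix $x \in \att_i(\mc K)$. Because $X$ is extended, $x$ has at least one modulator edge in $G_i$, so these edges give a common reference set $R_x$ for the circular order around $x$. For each blob $j$, let $\sigma_j$ denote the restriction of $\embP_i(x)$ to $R_x$ together with the attachment edges from $B^j$. The order $\embP_i(x)$ then induces a direct angle-assignment from the attachment edges of blob $j$ to the order of $R_x$ together with the edges of every other blob $j'$. By $2$-consistency, for each pair $j<j'$ this assignment agrees, up to $\iso$, with the one for the pair $(1,2)$. In particular, every blob occupies the same angles of $R_x$, and any two blobs interleave in the same way $\tau_x$, with the smaller index playing the role of blob~$1$.

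For $\emb_i(x)$ we insert the edges of $B^l$ into $\embP_i(x)$ according to the pattern of $B^{l-1}$ relative to every earlier blob. Inside each direct angle of $R_x$, the blobs' edges form a sequence governed by a fixed pairwise rule $\tau_x$. We must argue that this rule is "transitive", i.e.\ that it is realised as a pattern in which each blob is placed after all smaller ones, or before all of them, or nested in a uniform way. This should follow from applying $2$-consistency to the triples among $l-1 \ge |X|+3 \ge 3$ blobs. Placing $B^l$ as the new extreme element in that pattern is then consistent with every pair. \Cref{lm:merges} guarantees that the merged order is unique given the angle assignments, so the resulting order is well defined and every pair $(j,l)$ looks like $(1,2)$.

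\textbf{Shared graph and $2$-consistency.} Shared edges at $x$ are treated identically in every $G_i$. Each $\embP_i$ agrees on $G_\cap$, and the insertion rule depends only on the pairwise patterns, which agree on shared edges. Hence the $\emb_i$ induce the same rotation on the shared graph. $2$-consistency of $\embU$ holds by construction: pairs inside $[l-1]$ are unchanged, and pairs $(j,l)$ copy $(j,l-1)$, which in turn equals $(1,2)$.

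\textbf{Planarity (main obstacle).} Suppose $\emb_i$ is not planar. By \Cref{lm:planarityCharacterization} there is a witness: a simple cycle $C$ and a bridge $B$ with two fixtures on opposite sides, joined by a path $P$ in $B$. We prune this witness to a small one. $C$ passes through at most $|X|$ modulator vertices, so it visits at most $|X|$ blobs, since each excursion into a blob begins and ends in $X$. We shortcut $P$ so that it enters only a bounded number of blobs; at most one extra blob is needed to leave and return to $X$. The inside/outside status of a fixture depends only on the local rotations at its vertex, and these in turn are determined by the pairwise patterns of the blobs involved.

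The witness therefore involves at most $|X|+2 \le l-2$ blobs, one of which may be $B^l$. If it uses $B^l$, we can map it via $\iso$ to a set of blobs inside $[l-1]$ that avoids $B^l$, choosing the target blobs so that the relative index order is preserved. This is possible because $l-1 \ge |X|+3$. Since all rotations depend only on pairwise index-ordered patterns, the image is again a violating witness, now in $\embP_i$, which contradicts planarity.

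The delicate parts are the transitivity argument at attachment vertices and making the witness-bounding argument precise. The latter needs care when $C$ or $P$ revisits a blob several times, and when fixtures sit on attachment vertices whose rotation involves many blobs at once; we would handle that case by restricting the fixture's local order to the edges of the at most three blobs that matter.
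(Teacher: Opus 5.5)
You have the same architecture as the paper: copy blob $1$ into blob $l$, interleave at attachment vertices so that every pair $(j,l)$ looks like $(1,2)$, then confine an obstruction to few blobs and transport it into $[l-1]$. However, the two steps that carry the weight (the ones you flag as delicate) are only asserted, and one required check is missing.

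\textbf{Merging at attachment vertices.} At an attachment vertex $x$ you never show that one circular order realizing all prescribed pairs $(j,l)$ simultaneously exists. \Cref{lm:merges} gives uniqueness only. Your classification of the patterns (each blob after all smaller ones, before all of them, or nested) is also incomplete. Crossing configurations such as $f^1,f^2,\dots,f^{l-1},g^1,\dots,g^{l-1}$ can occur, e.g.\ when $f^j$ and $g^j$ lead into different components of $G_\cup-X$ inside blob $j$. So ``insert $B^l$ as the new extreme element'' is not a well-defined rule. This is exactly where the hypotheses enter in \Cref{lm:angleAssignment,lm:twoConsEmb}:
\begin{itemize}
\item the extended modulator rules out two transported angles $\alpha^1,\alpha^2$ overlapping twice, since a modulator edge at $x$ can lie inside neither;
\item $l\ge 4$ rules out disjoint angles, since $e^3$ lies inside both;
\item tidiness lets the edges of $B^l$ be inserted so that the result restricts to $\kappa^{1\rightarrow l}(\tilde{\mathcal E}_i^{\{1\}}(x))$ on blob $l$.
\end{itemize}

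\textbf{Transport of the planarity witness.} You rely on ``rotations depend only on pairwise patterns''. But the side of a fixture is the orientation of three edges that may lie in three different blobs, so you need $3$-consistency, which is the paper's \Cref{lm:m-consistent}. The same defect affects your shared-rotation check. If $x$ has no shared modulator edge, pairwise patterns restricted to shared edges do not determine the orientation of three shared edges from three blobs.

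Your ``triples'' intuition can be made rigorous as follows.
\begin{itemize}
\item Cut every rotation at $x$ at a fixed $m_0\in X_i(x)$, which exists because $X$ is extended. Rotations then become linear orders, given by comparisons that each involve at most two blobs.
\item Define the comparisons involving $B^l$ by transport from blobs $1$ and $2$.
\item Check transitivity on each triple by moving its at most three blobs order-preservingly into $[l-1]$; this needs $l-1\ge 3$.
\end{itemize}
The same observation shows directly that $2$-consistency implies $m$-consistency. Also, your blob count for $C\cup P$ should come from a degree count at $X$, as in the paper's small-obstruction claim, since $P$ may pass through modulator vertices that are not on $C$.

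\textbf{Relative positions.} You never check these. \Cref{lm:exclconnected} makes each $G_i$ connected but not $G_\cap$. So equal rotations on shared edges do not yet give a simultaneous embedding. You must also show that every vertex $v$ of one component of $G_\cap$ lies on the same side of every cycle $C$ of another component in all $\mathcal E_i$. The paper does this at the end of \Cref{lm:simEmb}. It places $C$ together with $v$ into at most $|X|+2<l$ blobs and uses that, by $m$-consistency, every proper subset of blobs induces a simultaneous embedding.
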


\begin{figure}[t]
\begin{subfigure}{\textwidth}
    \centering
    \input{figures/ramsey-overview}
    \phantomsubcaption\label{fig:strategyOverview}
	\phantomsubcaption\label{fig:computeReduced}
\end{subfigure}
\caption{
    \textbf{(a)} The overall strategy to test whether a given exclusive-connected instance $\mathcal I$ of \SEFE admits a simultaneous embedding.
    \textbf{(b)} Illustration of \Cref{lm:computeReduced}: Computing the blob-partition $\mc K = (X_m \dcup B^1 \dcup \dots \dcup B^{l_m}, (\iso^{j_1\rightarrow j_2})_{j_1,j_2 \in [l_m]})$ of a reduced instance $\Ir$, given a witness $S$ of $\vi\GU$.
    The (clickable) references in the figure link to the formal lemma statement.
}
\label{fig:overview}
\end{figure}

As the proof of this lemma is rather involved, we defer it to \Cref{sec:blobProof}.
We now use this lemma to obtain the desired bounded-size equivalent instance; see also \Cref{fig:strategyOverview}.
For a given exclusive-connected instance $\mc I$ of \SEFE, we will characterize equivalent \emph{reduced} subinstances on whose solutions we can apply \Cref{lm:addBlob} to obtain a solution of the original instance.
Such a reduced instance must admit a special blob-partition that (1) has a sufficient number of blobs for us to use Ramsey-theory to argue that any simultaneous embedding has $|X| + 3$ blobs that are $2$-consistent and (2) ``blowing up'' the instance by adding a finite number of additional blobs again yields (a superinstance of) the original instance.

We now first define the ``blow-up'' operation and the necessary functions to lower-bound the number of blobs, and we subsequently define reduced instances.

Let $\mc I = \Instance$ be an instance of \SEFE with a fixed blob-partition $\mc K = (X \dcup B^1 \dcup \dots \dcup B^{l}, (\iso^{j_1\rightarrow j_2})_{j_1,j_2 \in [l]})$.
For $t \in \mathbb N$, we let $\mc I^{+t}$ be the instance with blob-partition $\mc K^{+t} = (X \dcup B^1 \dcup \dots \dcup B^{l + t}, (\iso^{j_1\rightarrow j_2})_{j_1,j_2 \in [l + t]})$  obtained from $\mc I$ and $\mc K$ by adding $t$ additional blobs to the instance.
Let the multicolor Ramsey-number $R(a,b)$ denote the smallest integer~$n$ such that any edge-coloring of $K_n$ with $b$ colors contains a monochromatic clique of size at least $a$.
Moreover, let $g(n, k)$ denote the maximum number of distinct simultaneous rotation systems an $n$-vertex \SEFE-instance on $k$ input graphs can have.
\begin{definition}[Reduced instance]
    \label{def:reduced}
    Let $\mc I$ be an exclusive-connected instance of \SEFE on $k$ graphs.
    An exclusive-connected subinstance $\Ir$ of $\mc I$ is \emph{reduced} with respect to $\mc I$ if \Ir admits a blob-partition $\mc K = (X \dcup B^1 \dcup \dots \dcup B^{l}, (\iso^{j_1\rightarrow j_2})_{j_1,j_2 \in [l]})$ such that
    \begin{enumerate}
        \item\label{it:extended} the modulator $X$ of $\mc K$ is extended,
        \item\label{it:size} $l \geq R(|X|+3,g(|X| + 2|B^1|,k))$, and
        \item\label{it:blowup} for some $t \in \mathbb N$, $\Ir^{+t}$ is simultaneously isomorphic to a superinstance of $\mc I$.
             \end{enumerate}
\end{definition}

Roughly speaking, the choice of $g(|X| + 2|B^1|, k)$ for the number of colors in Property~\ref{it:size} upper-bounds the number of distinct simultaneous rotation systems for a pair of blobs.
The chosen value of $l$ thus guarantees that, in every solution of $\Ir$, we find $|X| + 3$ blobs that induce a $2$-consistent solution.
Since $\Ir$ is exclusive-connected and the modulator of $\mathcal K$ is extended (Property~\ref{it:extended}), we can use \Cref{lm:addBlob} to repeatedly insert additional blobs into this solution.
By Property~\ref{it:blowup}, we can reach a solution of an instance that is simultaneously isomorphic to a superinstance of $\mc I$.
This also induces a solution of $\mc I$.
Using this strategy, we can show that every reduced instance is equivalent to its original instance; see~\Cref{fig:strategyOverview}.

\begin{restatable}\restateref{lm:reducedEquivalent}{lemma}{lemmaReducedEquivalent}
    \label{lm:reducedEquivalent}
    Let $\mc I$ be an exclusive-connected instance of \SEFE.
    Let \Ir be an instance that is reduced with respect to $\mc I$.
    Then $\mc I$ is a yes-instance if and only if \Ir is.
\end{restatable}
\begin{prooflater}{plemmaReducedEquivalent}
    Since \Ir is a subinstance of $\mc I$, any solution of $\mc I$ also induces a solution of \Ir.

    Conversely, let $\embU$ be a simultaneous embedding of \Ir.
    Since \Ir is reduced with respect to $\mc I$, it is an exclusive-connected subinstance of $\mc I$ and there exists a blob-partition $\mc K = (X \dcup B^1 \dcup \dots \dcup B^{l}, (\iso^{j_1\rightarrow j_2})_{j_1,j_2 \in [l]})$ of \Ir that satisfies Properties~\ref{it:extended}--\ref{it:blowup} of \Cref{def:reduced}.

    \begin{claim}
        There exists a subset $S \subseteq [l]$ of size $|X| + 3$ such that $\embU^S$ is $2$-consistent with respect to $\mc K^S$.
    \end{claim}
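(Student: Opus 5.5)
We prove the claim by a standard Ramsey argument on the complete graph whose vertices are the blob indices $[l]$. For each pair $j_1 < j_2$ in $[l]$, we colour the edge $\{j_1, j_2\}$ by the simultaneous rotation system $\iso^{(j_1,j_2)\rightarrow(1,2)}(\embU^{\{j_1,j_2\}})$. This is the restriction of $\embU$ to the subinstance $\mc I^{\{j_1,j_2\}}$, transported via the well-defined simultaneous isomorphism $\iso^{(j_1,j_2)\rightarrow(1,2)}$ onto the fixed reference instance $\mc I^{\{1,2\}}$. That instance has exactly $|X| + 2|B^1|$ vertices, since all blobs are simultaneously isomorphic and thus have equal size. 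Hence the colour is one of at most $g(|X| + 2|B^1|, k)$ distinct simultaneous rotation systems of a fixed $(|X|+2|B^1|)$-vertex instance on $k$ input graphs.

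By Property~\ref{it:size} of \Cref{def:reduced}, $l \geq R(|X|+3, g(|X|+2|B^1|,k))$. The definition of the multicolour Ramsey number therefore yields a set $S \subseteq [l]$ with $|S| \ge |X|+3$ whose pairs all receive the same colour $\rho$. We shrink $S$ to exactly $|X|+3$ elements, which preserves monochromaticity.

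It remains to translate monochromaticity into $2$-consistency of $\embU^S$ with respect to $\mc K^S$. Let $r = (r_1, r_2)$ be the two smallest indices of $S$ and $s = (s_1, s_2)$ any increasing pair in $S$. Both $\iso^{r\rightarrow(1,2)}(\embU^{\{r_1,r_2\}})$ and $\iso^{s\rightarrow(1,2)}(\embU^{\{s_1,s_2\}})$ equal $\rho$. By the composition property (item~4 of \Cref{def:blob-partition}), $\iso^{r\rightarrow s} = \iso^{(1,2)\rightarrow s}\circ \iso^{r\rightarrow (1,2)}$, and $\iso^{(1,2)\rightarrow s}$ is the inverse of $\iso^{s\rightarrow(1,2)}$, because $\iso^{j\rightarrow j}$ is the identity (again by item~4). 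Hence $\emb_i^{\{s_1,s_2\}} = \iso^{r\rightarrow s}(\emb_i^{\{r_1,r_2\}})$ for every $i\in[k]$, which is exactly $2$-consistency.

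The only point needing care is well-definedness of the colouring. The colours must live in a common set of size at most $g(|X|+2|B^1|,k)$, which is why we transport to the reference pair $(1,2)$ rather than comparing abstractly. Restrictions of rotation systems commute with applying the isomorphisms because $\iso^{r\rightarrow s}$ is a simultaneous isomorphism, so it maps edges of $G_i$ to edges of $G_i$ and preserves incidences. No planarity argument is needed for this claim.
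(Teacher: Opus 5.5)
Your proposal is correct and follows essentially the same route as the paper: you colour each pair $j_1<j_2$ by $\iso^{(j_1,j_2)\rightarrow(1,2)}(\embU^{\{j_1,j_2\}})$, bound the number of colours by $g(|X|+2|B^1|,k)$, and apply Property~\ref{it:size} to obtain a monochromatic clique of size $|X|+3$. You then translate monochromaticity into $2$-consistency via the composition property of the isomorphisms. Your explicit verification that $\iso^{j\rightarrow j}$ is the identity, and that $\iso^{(1,2)\rightarrow s}$ inverts $\iso^{s\rightarrow(1,2)}$, fills in a step the paper leaves implicit.
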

    \begin{claimproof}
        Recall that all blobs of $\mc K$ have the same size and thus the number of possible simultaneous rotation systems of  $\GU[X \cup B^{1} \cup B^{2}]$ is at most $g(|X \cup B^{1} \cup B^{2}|,k) = g(|X| + 2|B^1|,k)$.
        Consider the complete graph $K_l$ on $l$ vertices that correspond to the blobs of $\mc K$.
        For all $j_1 < j_2 \in [l]$, color the edge between $B^{j_1}$ and $B^{j_2}$ with $\iso^{(j_1,j_2) \rightarrow (1,2)}(\embU^{\{j_1,j_2\}})$.
        As argued above, this yields at most $g(|X| + 2|B^1|,k)$ colors.
        Since $l \geq R(|X|+3,g(|X| + 2|B^1|,k))$, we find a monochromatic clique $S$ of size $|X|+3$ in $K_l$.
        For all $i \in [k]$ and all $a_1 < a_2 \in S$, $b_1 < b_2 \in S$, we thus have $\iso^{(a_1,a_2) \rightarrow (1,2)}(\embU^{\{a_1,a_2\}}) = \iso^{(b_1,b_2) \rightarrow (1,2)}(\embU^{\{b_1,b_2\}})$ and thus $\embU^{\{a_1,a_2\}} = \iso^{(b_1,b_2) \rightarrow (a_1,a_2)}(\embU^{\{b_1,b_2\}})$.
        Hence $\embU^S$ is $2$-consistent with respect to $\mc K^S$.
             \end{claimproof}

    By order-preservingly renaming the blobs of $\mc K$, we can assume without loss of generality that $\embU^{[|X|+3]}$ is $2$-consistent with respect to $\mc K^{[|X|+3]}$.
    By Property~\ref{it:blowup}, there exists a $t \in \mathbb N$ such that the instance $\Is \coloneqq \Ir^{+t}$ is simultaneously isomorphic to a superinstance of $\mc I$.
    After renaming the vertices of $\Is$ accordingly, we can assume that $\Is$ is a superinstance of $\mc I$.
    Let $\mc K' \coloneqq \mc K^{+t}$ be the corresponding blob-partition of $\Is$.
    Recall that $\Ir$ is exclusive-connected and the modulator $X$ of $\mc K$ is extended.
    Observe that adding additional blobs retains both properties and thus $\Is$ is exclusive-connected and $X$ is also extended in $\mc K'$.

    \begin{claim}
        Let $|X| +3 \leq m \leq l + t$.
        Then $\Is^{[m]}$ admits a simultaneous embedding that is $2$-consistent with respect to $\mc K'^{[m]}$.
             \end{claim}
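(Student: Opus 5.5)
The claim follows by induction on $m$, starting at $m = |X|+3$ and applying \Cref{lm:addBlob} once per step.

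\emph{Base case} $m = |X|+3$. Since $|X|+3 \le l$, the subinstance $\Is^{[|X|+3]}$ equals $\Ir^{[|X|+3]}$, and $\mc K'^{[|X|+3]} = \mc K^{[|X|+3]}$. By the first claim, after the order-preserving renaming, $\embU^{[|X|+3]}$ is $2$-consistent with respect to $\mc K^{[|X|+3]}$. It is also a simultaneous embedding of $\Ir^{[|X|+3]}$: it is a restriction of the simultaneous embedding $\embU$, and $\Ir^{[|X|+3]}$ is exclusive-connected by \Cref{lm:connectedSubinstances}, so the restricted rotation systems are planar and agree on shared edges. This settles the base case.

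\emph{Inductive step} from $m-1$ to $m$, for $|X|+4 \le m \le l+t$. We apply \Cref{lm:addBlob} to the instance $\Is^{[m]}$ with blob-partition $\mc K'^{[m]}$, which has size $m \ge |X|+4$. We check its hypotheses in turn.
\begin{itemize}
\item $\Is^{[m]}$ is exclusive-connected and $X$ is extended in $\mc K'^{[m]}$. This follows from \Cref{lm:connectedSubinstances} applied to $\Is$ and $\mc K'$, since we already noted that $\Is$ is exclusive-connected and $X$ is extended in $\mc K'$.
\item The restriction of $\mc K'^{[m]}$ to $[m-1]$ is exactly $\mc K'^{[m-1]}$, and $(\Is^{[m]})^{[m-1]} = \Is^{[m-1]}$.
\item By the induction hypothesis, $\Is^{[m-1]}$ has a simultaneous embedding that is $2$-consistent with respect to $\mc K'^{[m-1]}$.
\end{itemize}
\Cref{lm:addBlob} then yields a simultaneous embedding of $\Is^{[m]}$ that is $2$-consistent with respect to $\mc K'^{[m]}$.

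The argument has no real obstacle, since \Cref{lm:addBlob} carries all the weight. The only points needing care are bookkeeping. The blob-partitions of sub-instances must nest correctly: the isomorphisms $\iso^{j_1\to j_2}$ of $\mc K'$ restrict consistently to index sets $[m]$. The size condition $m \ge |X|+4$ is exactly what \Cref{lm:addBlob} requires. Finally, the renaming used in the first claim must be done in $\mc K$ before the blow-up to $\mc K' = \mc K^{+t}$, so that the $2$-consistent blobs are indeed the first $|X|+3$ blobs of $\mc K'$. This holds because the $t$ new blobs receive indices after $l$.
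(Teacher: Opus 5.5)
Your proof is correct and follows the paper's argument: induction on $m$, with the base case $m = |X|+3$ supplied by the Ramsey clique after order-preserving renaming, and each inductive step obtained from \Cref{lm:addBlob} once \Cref{lm:connectedSubinstances} gives exclusive-connectedness and extendedness for $\Is^{[m]}$ and $\mc K'^{[m]}$. Your extra bookkeeping makes explicit what the paper leaves implicit, namely that $|X|+3 \le l$, that the restrictions of $\mc K'$ nest correctly, and that the renaming happens before the blow-up.
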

    \begin{claimproof}
        For $m = |X| + 3$, we have $\mc K'^{[m]} = \mc K^{[|X| + 3]}$ and thus the statement holds as argued above.
        Let $m > |X| + 3$.
        By induction, the instance $\Is^{[m-1]}$ admits a simultaneous embedding $\embPU$ that is $2$-consistent with respect to  $\mc K'^{[m-1]}$.
        By \Cref{lm:connectedSubinstances}, $\Is^{[m]}$ is exclusive-connected and $X$ is extended in $\mc K'^{[m]}$.
        We can thus apply \Cref{lm:addBlob} and obtain the desired simultaneous embedding of $\Is^{[m]}$ that is $2$-consistent with respect to $\mc K'^{[m]}$.
    \end{claimproof}
    
    Hence the instance $\Is = \Is^{[l + t]}$ admits a simultaneous embedding and is therefore a yes-instance.
    Since $\mc I$ is a subinstance of \Is, the same holds for $\mc I$, which concludes the proof.
\end{prooflater}

As the final step, we now show that we can compute a bounded-size reduced instance in fixed-parameter time; see \Cref{fig:computeReduced} for an illustration.

\begin{restatable}\restateref{lm:computeReduced}{lemma}{lemmaComputeReduced}
    \label{lm:computeReduced}
    There is an algorithm that takes as input an exclusive-connected instance $\mc I = \Instance$ of \SEFE and computes an instance of size at most $f(\vi\GU, k)$ that is reduced with respect to $\mc I$ in time $f(\vi\GU, k) \cdot {|V(\GU)|}^{O(1)}$ with $f(\vi\GU, k)  \in 2\upup O(\vi\GU^2 \cdot (k+2)^{\vi\GU^2+1})$.
\end{restatable}
\begin{proofsketch}
    We start with a witness $S \subseteq V(\GU)$ of $\vi\GU$, i.e., we have $|S \cup C| \leq \vi\GU$ for all components $C$ of $\GU - S$.
    We group these components into a bounded number of equivalence classes based on $S$-preserving simultaneous isomorphism.
    Using $S$ and these equivalence classes, we construct the desired reduced instance with a suitable blob-partition.

    Roughly speaking, the goal is to compute an extended modulator $X_m$ such that all remaining equivalence classes that are not contained in $X_m$ are sufficiently large for us to split them into sufficiently many blobs (Property~\ref{it:size}); see \Cref{fig:computeReduced}.
    First, we extend $S$ to a set $S'$ by adding one representative of each equivalence class; this will ensure that the resulting modulator is extended.
    We then start with $X_0 = S'$ and for $i > 0$ we iteratively compute a growing set $X_i$ as follows.
    If there remains an equivalence class of size less than $R(|X_{i-1}|+3,g(3|X_{i-1}|,k))$, we obtain the set $X_{i}$ by adding this equivalence class to $X_{i-1}$.
    If no such equivalence class remains, the procedure stops with the desired set $X_m$.
    
    Since $X_i$ grows in every step, so does the required bound, which is why we only move one equivalence class at a time into the modulator.
    Because the number of equivalence classes is bounded, this process stops after a bounded number of steps and the resulting set $X_m$ has bounded size.
    We then compute a subinstance \Ir of $\mc I$ together with a corresponding blob-partition $\mc K = (X_m \dcup B^1 \dcup \dots \dcup B^l, (\iso^{j_1\rightarrow j_2})_{j_1,j_2 \in [l]})$ by using $X_m$ as the modulator and constructing $l = R(|X_m|+3,g(3|X_m|,k))$ blobs that consist of precisely one copy of each remaining equivalence class; see \Cref{fig:computeReduced}.
    Note that the size of each blob is bounded.

    We then show that \Ir is reduced.
    The instance \Ir inherits exclusive-connectedness from $\mc I$ and the modulator $X_m$ is extended by construction.
         Because~$X_m$ contains one copy of each equivalence class by construction, we have $|X_m| \geq |B^1|$ and thus also $l  \geq R(|X_m|+3,g(|X_m| + 2|B^1|,k))$.
    The blob-partition therefore has sufficient size.
    Moreover, observe that the difference between \Ir and $\mc I$ lies in copies of equivalence classes whose sizes are at least $l$.
    By adding a sufficient number of copies of blobs to \Ir, we thus obtain an instance that is (simultaneously isomorphic to) a superinstance of $\mc I$.
    Thus \Ir is reduced.

    Since the size of $X_m$, $l$, and each blob is bounded in $\vi\GU$ and $k$, the same holds for $|\Ir|$.
    Note, however, that the iterative computation of the modulator yields a tower of exponentials that also has exponential height.
    Moreover, we can compute \Ir in fixed-parameter time.
\end{proofsketch}
\begin{prooflater}{plemmaComputeReduced}
    Let $S \subseteq V(\GU)$ be a witness of $\vi\GU \eqqcolon \psi$ and let $\mc C$ be the set of connected components of $\GU - S$.
    Let $\sim$ denote the equivalence relation on $\mc C$ where two connected components $C_1, C_2 \in \mc C$ are equivalent if there exists an $S$-preserving simultaneous isomorphism between $\GU[S \cup C_1]$ and $\GU[S \cup C_2]$.
    Recall that we have $|S \cup C| \leq \psi$ for each $C \in \mc C$.

    Consider instances of \SEFE that contain $S$ and a total of at most $\psi$ vertices.
    Observe that there are at most $\psi \cdot (k+2)^{\psi^2}$ distinct instances of \SEFE (up to $S$-preserving simultaneous isomorphism), as every pair of vertices is either a non-edge, a shared edge, or an edge in exactly one of the $k$ input graphs.
    Thus, we can upper-bound the number of equivalence classes of $\sim$ by $h(\psi, k) = \psi \cdot (k+2)^{\psi^2}$.
     
    We now employ the following strategy; see also \Cref{fig:computeReduced}.
    To obtain a suitable blob-partition of a reduced instance, our goal is to form blobs which consist of precisely one copy of each equivalence class.
    Since the blob-partition must have an extended modulator, we will first add one copy of each equivalence class to $S$.
    As we also need a sufficient number of blobs, we then remove equivalence classes of insufficient size by adding them to the modulator.
    As this increases the size of the modulator and thus also of the required number of blobs, we have to repeat this step until all remaining equivalence classes are sufficiently large (or until all of them are part of the modulator, in which case the entire graph has bounded size).

    We first fix a function $u(x,k)$ that upper-bounds $R(x + 3, g(3x,k))$.
    It is known that $R(a,b) \leq b^{ab}$~\cite{BBCGHMST26} and observe that $g(n, k)$ is trivially upper-bounded by $(n!)^{nk} \leq n^{n^2k}$.
    We thus choose $u(x,k) \coloneqq (3x)^{(3x)^2\cdot k\cdot (x+3) \cdot (3x)^{(3x)^2 \cdot k}} \geq R(x + 3, g(3x,k))$.

    Let $K \subseteq \mc C$ contain precisely one representative of each equivalence class.
    Let $S' = S \cup \bigcup_{C \in K} C$ and $\mc C' = \mc C \setminus K$.
    We now inductively define sets $X_i$ and $\mc C_i$ for all $i \in \mathbb N$ as follows.
    For $i = 0$, we set $X_0 \coloneqq S'$ and $\mc C_0 = \mc C'$.
         For $i > 0$, let $l_i \coloneqq u(|X_{i-1}|,k)$.
    If there exists an equivalence class $[C] \in \mc C_{i-1} / {\sim}$ with $|[C]| < l_i$, we set $X_i \coloneqq X_{i-1} \cup \bigcup_{C' \in [C]} C'$ and $\mc C_i = \mc C_{i-1} \setminus [C]$.
    If no such equivalence class exists, we set $X_i \coloneqq X_{i-1}$ and $\mc C_i \coloneqq \mc C_{i-1}$.
    
    Once the latter case occurs, observe that all subsequent steps also leave the two sets unchanged.
    Since we have at most $h(\psi, k)$ equivalence classes, this process converges after at most $h(\psi, k)$ steps and thus for $m = h(\psi, k) + 1$, we have $X_m = X_{m-1}$ and every equivalence class of $\mc C_m / {\sim}$ has size 
         at least $l_m = u(|X_m|, k)$.

    Let $\mc K = (X_m \dcup B^1 \dcup \dots \dcup B^{l_m}, (\iso^{j_1\rightarrow j_2})_{j_1,j_2 \in [l_m]})$ be the blob-partition of a subinstance $\Ir \coloneqq (G_1', \dots, G_k')$  of $\mc I$ consisting of $l_m$ blobs, each of which contains precisely one copy of each equivalence class of   $\mc C_m / {\sim}$.
    For the isomorphisms $(\iso^{j_1\rightarrow j_2})_{j_1,j_2 \in [l_m]}$ of $\mathcal K$, note that the $S$-preserving simultaneous isomorphisms for equivalent components of $\GU - S$ can be extended to $X_m$-preserving simultaneous isomorphisms for equivalent components of $\GU - X_m$.
    These can in turn be combined to $X_m$-preserving simultaneous isomorphisms for the blobs.
    To ensure their consistency, first fix an isomorphism $\iso^{1 \rightarrow j}$ for all $j \in [l_m]$ (where $\iso^{1 \rightarrow 1}$ is the identity) and set $\iso^{j_1 \rightarrow j_2} \coloneqq \iso^{1 \rightarrow j_2} \circ ({\iso^{1 \rightarrow j_1}})^{-1}$.
    This way, $\mathcal K$ is indeed a blob-partition.
    Our next goal is to show that $\Ir$ is reduced with respect to $\mc I$ by verifying the requirements of \Cref{def:reduced}.

    We first show that $\Ir$ remains exclusive-connected using a similar argument as the proof of \Cref{lm:connectedSubinstances}.
    Let $i \in [k]$ and $u, v \in V(G'_i)$.
    Because $G_i$ is connected, there exists a $uv$-path $P$ in $G_i$.
    Assume that $P$ contains vertices of a connected component $C \in \mathcal C_m \subseteq \mathcal C$ that is not part of $\Ir$.
    Consider a maximal subpath $P'$ of $P$ whose internal vertices all belong to $C$.
    Since the endpoints of $P$ are not contained in $C$, $N_{G_i}(C) \subseteq S$ implies that the endpoints of $P'$ both lie in $S$.
    There exists an equivalent connected component $C' \sim C$ of $\mathcal C / {\sim}$ with $C' \subseteq S'$, because we added one representative of each equivalence class to $S'$.
    Since $C'$ and $C$ are equivalent, there is an $S$-preserving isomorphism between $G_i[S \cup C']$ and $G_i[S \cup C]$.
    Because $C', S \subseteq S' \subseteq X_m$, we have $S \cup C' \subseteq X_m \subseteq V(G'_i)$.
    We can thus reroute $P'$ through $G_i[X_m] = G'_i[X_m]$.
         Hence $G'_i$ is connected.

    We now argue that the modulator $X_m$ of $\mathcal K$ is extended (Property~\ref{it:extended}) in a similar fashion.
    To this end, let $i \in [k]$ and consider a vertex $v \in \att_i(\mc K) \subseteq X_m$.
    By the definition of $\att_i(\mc K)$, there exists a neighbor $u \notin X_m$ of $v$ in $G_i'$.
    Since $S \subseteq X_m$, we also have $u \notin S$ and thus there exists a connected component $C \in \mc{C}$ of $\GU - S$ with $u \in C$.
    Since $u \in C$ and $v \in X_m$ but $u \notin X_m$, we have $v \notin C$.
    Because $N_{G_i}(C) \subseteq S$, the existence of edge $uv$ thus implies that $v \in S$.
    As above, there exists an equivalent connected component $C' \sim C$ with $C' \subseteq X_m$.
    The corresponding $S$-preserving isomorphism between $G_i[S \cup C']$ and $G_i[S \cup C]$ yields a neighbor $u'$ of $v$ in $G_i[S \cup C']$.
    Since $S \cup C' \subseteq X_m$, $v$ thus has degree at least one in $G_i'[X_m] = G_i[X_m]$.
    Hence $X_m$ is extended.

    Towards showing $l_m \geq R(|X_m| + 3, g(|X_m| + 2|B^1|, k))$ (Property~\ref{it:size}), we now argue $|X_m| \geq |B^1|$.
    Observe that $B^1$ contains at most one copy of every equivalence class of $\mc{C}/{\sim}$ (since $\mc{C}_m/{\sim} \subseteq \mc{C}/{\sim}$).
    Moreover, recall that $K$ contains one copy of every equivalence class of $\mc{C}/{\sim}$ and ${K \subseteq S'}$.
    Finally, recall that $S' \subseteq X_m$, which implies $|X_m| \geq |B^1|$.
    Thus $l_m = u(|X_m|, k) \geq R(|X_m| + 3, g(3|X_m|, k)) \geq R(|X_m| + 3, g(|X_m| + 2|B^1|, k))$.

    Finally, consider Property~\ref{it:blowup}.
    Let $w = \max(\{l_m\} \cup \{|[C]| \mid [C] \in \mc{C}_m/{\sim}\})$ and observe $w \geq l_m$.
    Let $t = w - l_m$ and consider the instance $\Is = \Ir^{+t}$.
    Observe that $\Is$ contains $X_m$ and $w$ copies of each equivalence class of $\mc{C}_m/{\sim}$.
    Hence $\Is$ is simultaneously isomorphic to a superinstance of $\mc I$.
    Therefore, $\Ir$ is reduced with respect to $\mc I$.

    \medskip

    It remains to establish an upper bound on the size of $\Ir$ and the running time of the above-presented algorithm.
    
    \begin{claim}
        We have $|\Ir| \in 2\upup O(\psi^2 \cdot (k+2)^{\psi^2+1})$.
             \end{claim}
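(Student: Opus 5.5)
The size of $\Ir$ is $|X_m| + l_m \cdot |B^1|$, and $|B^1| \leq |X_m|$ was shown above. It therefore suffices to bound $|X_m|$ and $l_m = u(|X_m|,k)$. The plan is to track how $|X_i|$ grows over the at most $h(\psi,k) = \psi\cdot(k+2)^{\psi^2}$ iterations and to show that each iteration adds only a constant number of levels to an exponential tower.

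\textbf{Step 1: the starting point.} We have $|S'| \leq |S| + h(\psi,k)\cdot\psi \leq \psi + \psi^2 (k+2)^{\psi^2}$, since $K$ contains one component of size at most $\psi$ per equivalence class. This is at most $2\upup O(1)$ applied to $O(\psi^2(k+2)^{\psi^2})$. More simply, $|X_0| \leq 2^{O(\psi^2 (k+2)^{\psi^2})}$.

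\textbf{Step 2: one iteration.} If $X_i \neq X_{i-1}$, the added equivalence class has fewer than $l_i = u(|X_{i-1}|,k)$ members, each of size at most $\psi$. Hence
\[ |X_i| \leq |X_{i-1}| + \psi \cdot u(|X_{i-1}|,k). \]
Now $u(x,k) = (3x)^{(3x)^2 k (x+3)(3x)^{(3x)^2k}}$. Taking logarithms twice gives $\log\log u(x,k) \leq O(x^2 k \log x)$, and since $k \leq x$ this is at most $2^{O(\log x)}$. So $u(x,k) \leq 2^{2^{2^{c\log x}}}$, which means $u$ raises the tower height of its argument by a constant $c'$. Concretely, if $x \leq 2\upup z$ then $u(x,k) \leq 2\upup (z+c')$, and the same holds for $x + \psi u(x,k)$, since $\psi \leq x$.

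\textbf{Step 3: iterate.} Applying Step 2 over the at most $h(\psi,k)$ iterations gives $|X_m| \leq 2\upup\bigl(O(1) + c' h(\psi,k)\bigr) = 2\upup O(\psi(k+2)^{\psi^2})$. One final application of $u$ gives $l_m$, and multiplying by $|B^1| \leq |X_m|$ adds only a constant to the height. This yields the bound $2\upup O(\psi^2(k+2)^{\psi^2+1})$, which is even weaker than what the argument produces, so the claim follows.

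The main obstacle is the bookkeeping in Step 2: showing that a polynomial-exponential expression such as $u$ applied to $2\upup z$ stays within $2\upup(z+O(1))$. I would do this by using $\log\log\log u(x,k) = O(\log x)$, together with the fact that $\log$ of $2\upup z$ is $2\upup(z-1)$. This requires $x \geq 2$; the trivial small cases $x<2$ can be absorbed into the constants.
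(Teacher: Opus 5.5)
Your proof has the same structure as the paper's. You bound $|X_0|$, show that each of the at most $h(\psi,k)+1$ rounds raises the height of the exponential tower additively, and finish with $|\Ir| \le |X_m|(l_m+1)$, using $|B^1| \le |X_m|$ and $l_m = u(|X_m|,k)$.

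The two arguments differ in the per-round accounting. The paper fixes a constant $c_2$ with $x + \psi\, u(x,k) \le \expt{c_2(k+\psi)}{x}$, so it charges $c_2(k+\psi)$ levels per round. That inequality holds for all $x$, $k$, $\psi$ with no relation between them. The resulting height $O(m(k+\psi))$ is exactly where the claimed $O(\psi^2(k+2)^{\psi^2+1})$ comes from. You charge only $O(1)$ levels per round and so get the sharper height $O(\psi(k+2)^{\psi^2})$. The cost is that the argument $x$ of $u$ must dominate both $\psi$ and $k$.

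For $\psi$ this holds, but you should say why. $X_0 = S'$ contains $S$ together with a representative of the class of a largest component of $\GU - S$, so $|X_{i-1}| \ge |X_0| \ge \psi$.

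The inequality $k \le x$, however, is asserted without justification and can fail. Nothing bounds the number of input graphs in terms of $|X_{i-1}|$. For example, take $\GU = K_p$ with a shared Hamiltonian path and each remaining edge exclusive to its own input graph. Then $k = \binom{p-1}{2}$ exceeds $|X_0| = p$ for $p \ge 5$.

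The repair is immediate. Your own computation gives $\log\log\log u(x,k) = O(\log x + \log k)$. So carry $\max(|X_i|,k) \le 2\upup z_i$ through the induction instead of $|X_i| \le 2\upup z_i$; each round then still adds only a constant number of levels.

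Similarly, the starting height is not $O(1)$ as Steps 1 and 3 state. It is some $z_0 = O(\psi+k)$, since $\max(|X_0|,k) \le \psi + \psi^2(k+2)^{\psi^2} + k$. This is still absorbed into $O(\psi(k+2)^{\psi^2})$ and hence into the claimed bound. With these two adjustments your argument is correct.
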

         \begin{claimproof}
                 We first inductively prove an upper bound for $|X_i|$ for all $i \in \mathbb N_0$.
        To this end, we first fix corresponding constants.
        Let $\expt{a}{b} = 2^{2^{\dots^{2^b}}}$ denote the $a$-fold concatenation of base-2 exponentiation.
        Recall that $X_0 = S'$ and $S'$ is obtained by adding a copy of every equivalence class to $S$.
        Thus, $|X_0| \leq \psi + \psi \cdot h(\psi, k) = \psi + \psi^2 \cdot (k+2)^{\psi^2}$.
        This can clearly be upper-bounded by $\expt{c_1}{k + \psi}$ for a small constant $c_1 \in \mathbb N$.
        Moreover, let $c_2 \in \mathbb N$ be a small constant such that $x + u(x, k) \cdot \psi \leq \expt{c_2 \cdot (k + \psi)}{x}$.
        Recall that $u$ is at most triple exponential in $x$ and $k$ and thus such a constant clearly exists.

        We now show inductively that $|X_i| \leq \expt{(i \cdot c_2 + c_1) \cdot (k + \psi)}{k + \psi}$.
        By the choice of $c_1$, this holds for $i = 0$.
        For $i > 0$, recall that we either have $X_i = X_{i-1}$ or $X_i$ is obtained from $X_{i-1}$ by adding all connected components of an equivalence class of size smaller than $l_i = u(|X_{i-1}|, k)$.
        Since each such connected component has size at most $\psi$, we have $|X_i| \leq |X_{i-1}| + u(|X_{i-1}|, k) \cdot \psi$.
        By the choice of $c_2$, we have 
        \begin{align*}
          |X_i| &\leq |X_{i-1}| + u(|X_{i-1}|, k) \cdot \psi\\
                &\leq\expt{c_2 \cdot (k+\psi)}{|X_{i-1}|} \\
                &\leq\expt{c_2 \cdot (k+\psi)}{\expt{((i-1) \cdot c_2 + c_1) \cdot (k+\psi)}{k+\psi}} \\
                &=   \expt{(c_2 + (i-1) \cdot c_2 + c_1) \cdot (k+\psi)}{k+\psi} \\
                &\leq\expt{(i \cdot c_2 + c_1) \cdot (k+\psi)}{k+\psi}
        \end{align*}

        Recall that $l_i = u(|X_{i-1}|, k)$.
        Hence also $l_i \leq \expt{c_2 \cdot (k + \psi)}{|X_{i-1}|}$.
        We thus analogously obtain $l_i \leq \expt{(i \cdot c_2 + c_1) \cdot (k + \psi)}{k + \psi}$.
        Hence $|X_m|, l_m \in \expt{O(m \cdot (k + \psi))}{k + \psi}$.

        We are now ready to bound $|\Ir|$.
        Observe that $|\Ir| = |X_m| + |B^1| + \dots + |B^{l_m}| = |X_m| + l_m \cdot |B^1|$.
        Recall from above that $|B^1| \leq |X_m|$ and thus $|\Ir| \leq |X_m| \cdot (l_m + 1)$.
        Using the computed bounds and the fact that the product only adds a constant number of levels in the tower of exponentials, we also obtain $|\Ir| \in  \expt{O(m \cdot (k + \psi))}{k + \psi}$.
        Since $m = h(\psi, k) + 1 = \psi \cdot (k+2)^{\psi^2} + 1$, we thus have $|\Ir| \in  \expt{O(\psi^2 \cdot (k+2)^{\psi^2+1})}{k + \psi}$.
        This is upper-bounded by $2\upup O(\psi^2 \cdot (k+2)^{\psi^2+1})$, which gives the desired bound.
    \end{claimproof}

    Note that the algorithm described above also computes $\Ir$ in time $2\upup O(\psi^2 \cdot (k+2)^{\psi^2+1}) \cdot |V(\GU)|^{O(1)}$.
    This includes computing the witness $S$ of $\vi\GU$, which can be done in $O(\psi^{\psi+1} \cdot |V(\GU)|)$ time~\cite{DrangeDH16}, and computing the equivalence classes of $\sim$.
               \end{prooflater}

Now let $\mc I = (G_1, \dots G_k)$ be an instance of \SEFE.
By \Cref{lm:exclconnected}, we can assume that $\mc I$ is exclusive-connected, as the corresponding construction keeps the vertex integrity of the union graph plus the number of input graphs bounded.
Using \Cref{lm:computeReduced}, we can then compute a bounded-size reduced instance of $\mc I$ in fixed-parameter time.
This instance can be solved by enumerating all simultaneous rotation systems and testing whether one of them is a simultaneous embedding.
Since the reduced instance is equivalent to the original instance by \Cref{lm:reducedEquivalent}, we thus obtain the following result.

\vertexintegrityfpt*

\subsection{Proof of \Cref{lm:addBlob}}
\label{sec:blobProof}

\addBlob*
 To construct the desired simultaneous embedding \embU of $\mc I$, we will employ the following strategy.
For each $i \in [k]$, note that $G_i$ is the union of $G_i^{[l-1]}$ and $G_i^{\{l\}}$.
We take the rotation system $\embP_i$ of $G_i^{[l-1]}$ and for $G_i^{\{l\}}$, we obtain a rotation system by ``copying'' the rotation system ${\embP_i}^{\{1\}}$ under the isomorphism $\iso^{1\rightarrow l}$.
Since $G_i$ is connected, to obtain an embedding $\emb_i$ of $G_i$, we then only have to specify how to interleave the rotations of these two embeddings at attachment vertices, as the two rotation systems already agree on modulator edges.
We have to do this in such a way that (1) $\embU$ is $2$-consistent with respect to $\mc K$, (2) each $\emb_i$ is a planar embedding, and (3) $\embU = (\emb_1, \dots, \emb_k)$ is a simultaneous embedding.
Proving that these properties can be maintained is non-trivial and we need several preliminary results.

We first characterize $2$-consistency in terms of angle-assignments (\Cref{lm:twoconsistent}) and use this to show that the $2$-consistent extension of $\embPU$ to $\mc I$ exists (\Cref{lm:angleAssignment,lm:twoConsEmb}).
We then use the fact that the modulator is extended to show that $2$-consistency also implies $m$-consistency for all $m \in [l]$ (\Cref{lm:m-consistent}).
Using this, together with the fact that obstructions to planarity (\Cref{lm:planarityCharacterization}) and simultaneous planarity appear in a small number of blobs, we then show that the constructed simultaneous rotation system is a simultaneous embedding of $\mc I$ (\Cref{lm:simEmb}).

Let $j \in [l]$ and let $v^j$ be a vertex of $\GU^{\{j\}}$.
For each $j' \in [l]$, we let $v^{j'} = \iso^{j \rightarrow j'}(v^j)$ denote the corresponding vertex in $\GU^{\{j'\}}$ (due to the fourth property of \Cref{def:blob-partition}, we have that $\iso^{j \rightarrow j}$ is the identity).
Observe that, for $j' \neq j$, we have $v^j = v^{j'}$ if and only if $v^j \in X$.
Analogously, we define $e^{j'}$ for every edge $e^j$ of $\GU^{\{j\}}$.
As above, observe that we have $e^{j'} = e^j$ for $j \neq j'$ if and only if $e^j$ is a modulator edge.

Let $ i \in [k]$ and let $x \in \att_i(\mc K)$ be an attachment vertex in $G_i$.
We let $X_i(x)$ denote the set of modulator edges incident to $x$ in $G_i$ and for $j \in [l]$, we let $A_i^j(x)$ denote the set of attachment edges incident to $x$ in $G_i^{\{j\}}$.
For $S \subseteq [l]$, observe that the edges incident to~$x$ in $G_i^S$ are  $X_i(x) \cup \bigcup_{j \in S}A_i^j(x)$.
Note that $|X_i(x)| \geq 1$, because $X$ is an extended modulator.

In the following, we first characterize $2$-consistency using angle-assignments.
To this end, we consider the direct angle-assignment induced by the first two blobs and map it to every other pair of blobs using the corresponding isomorphism; see \Cref{fig:transport}.
We can then characterize $2$-consistency in terms of these angle-assignments.

\begin{figure}
    \centering
    \begin{subfigure}{0.24\textwidth}
      \centering
      \includegraphics[page=1]{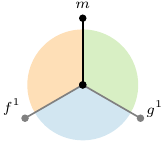}
      \caption{}
      \label{fig:transportA}
    \end{subfigure}
    \hfill
    \begin{subfigure}{0.24\textwidth}
      \centering
      \includegraphics[page=2]{figures/AngleTransport}
      \caption{}
      \label{fig:transportB}
    \end{subfigure}
    \hfill
    \begin{subfigure}{0.24\textwidth}
      \centering
      \includegraphics[page=3]{figures/AngleTransport}
      \caption{}
      \label{fig:transportC}
    \end{subfigure}
    \hfill
    \begin{subfigure}{0.24\textwidth}
      \centering
      \includegraphics[page=4]{figures/AngleTransport}
      \caption{}
      \label{fig:transportD}
    \end{subfigure}
    \caption{
    \textbf{(a)} An attachment vertex with one modulator edge $m$ and two attachment edges $f^1, g^1$ in $G_i^{\{1\}}$, together with a corresponding rotation $\mathcal E_i^{\{1\}}(x)$ and its three direct angles highlighted.
    \textbf{(b)} A $1$-consistent rotation $\mathcal E_i^{\{1, 2\}}(x)$, with the induced angle-assignment $\tphix^{2 \rightarrow 1}$ indicated with colors.
    \textbf{(c)} A $1$-consistent rotation $\mathcal E_i^{\{1, 2, 3\}}(x)$ that does not satisfy the transported angle-assignment $\phix^{3 \rightarrow 1}$, because $f^3$ does not lie in its designated angle of $\mathcal E_i^{\{1\}}(x)$.
    \textbf{(d)} A different rotation that is $2$-consistent and satisfies all transported angle-assignments.
    }
    \label{fig:transport}
\end{figure}

Let $\embU$ be a simultaneous rotation system of $\mc I$ such that $\embU$ is $1$-consistent with respect to $\mc K$.
Let $i \in [k]$, let $x \in \att_i(\mc K)$ and let $j \in [l]$.
For a direct angle $\alpha^j = (e^j, f^j)$ of $\emb_i^{\{j\}}(x)$ and $j' \in [l]$ we let $\alpha^{j'} = \iso^{j \rightarrow j'}(\alpha^j) \coloneqq (e^{j'}, f^{j'})$ denote the corresponding direct angle of $\emb_i^{\{j'\}}(x) = \iso^{j \rightarrow j'}(\emb_i^{\{j\}}(x))$.
We let $\tphix^{2 \rightarrow 1}$ be the direct angle-assignment from $A_i^2(x)$ to $\emb_i^{\{1\}}(x)$ induced by $\emb_i^{\{1,2\}}(x)$.
For $1 \leq j_1 < j_2 \leq l$, we let $\phix^{j_2 \rightarrow j_1} \coloneqq \iso^{(1,2) \rightarrow (j_1, j_2)} \circ \tphix^{2 \rightarrow 1} \circ \iso^{(j_1,j_2) \rightarrow (1, 2)}_{\vert A_i^{j_2}(x)}$ denote the corresponding direct angle-assignment from $A_i^{j_2}(x)$ to $\emb_i^{\{j_1\}}(x)$.

For each $S \subseteq [l]$ with $|S| \geq 2$ and each $1$-consistent simultaneous rotation system of~$\mc I^S$, we analogously define the angle-assignment $\phix^{j_2 \rightarrow j_1}$ for all $j_1 \in S$ and all $j_2 \in [l]$ with $j_1 < j_2$ (note that $j_2$ need not belong to $S$) based on the angle-assignment induced by the first two blobs of $\mathcal I^S$.
 If two $1$-consistent simultaneous rotation systems coincide on the first pair of blobs, they thus induce the same angle-assignments $\phix^{j_2 \rightarrow j_1}$.
Moreover, note that the choice of the reference pair is irrelevant if the simultaneous rotation system is $2$-consistent.

Since we only assume that $\embU$ is $1$-consistent, note that $\emb_i^{\{j_1,j_2\}}(x)$ does not necessarily satisfy $\phix^{j_2 \rightarrow j_1}$; see \Cref{fig:transport}.
However, we now show that satisfying these angle-assignments is necessary and sufficient for $2$-consistency.
Roughly speaking, by \Cref{lm:merges}, these angle-assignments uniquely define a combined rotation at $x$ for each pair of blobs.
As the angle-assignments coincide under isomorphism for all pairs of blobs by definition, the combined rotations thus also coincide under isomorphism for all pairs of blobs, implying $2$-consistency.
 
\begin{restatable}\restateref{lm:twoconsistent}{lemma}{lemmaTwoconsistent}
  \label{lm:twoconsistent}
  Let $\mc I = \Instance$ be an exclusive-connected instance of \SEFE.
  Let $\mc K$ be a blob-partition of size $l \geq 3$ of $\mc I$ whose modulator is extended and let $\embU$ be a simultaneous rotation system of $\mc I$.
  Then \embU is $2$-consistent with respect to $\mc K$ if and only if
  \begin{enumerate}
    \item \embU is $1$-consistent with respect to $\mc K$,
    \item $\embU^{[l-1]}$ is $2$-consistent with respect to $\mc K^{[l-1]}$, and
    \item for all $i \in [k]$ and each $x \in \att_i(\mc K)$, $\emb_i(x)$ satisfies the angle-assignments $\phix^{l \rightarrow 1}, \dots, \phix^{l \rightarrow l-1}$.
  \end{enumerate}
  
\end{restatable}
\begin{prooflater}{plemmaTwoconsistent}
  $(\Rightarrow)\colon$
        Assume \embU is $2$-consistent with respect to $\mc K$.
  We first show $1$-consistency, i.e., $\emb_i^{\{j\}} = \iso^{1 \rightarrow j}(\emb_i^{\{1\}})$ for all $j \in [l]$.
  For $j = 1$, the statement holds trivially.
  For $j > 1$, we have $\emb_i^{\{1,j\}} = \iso^{(1,2) \rightarrow (1,j)}(\emb_i^{\{1,2\}})$ by $2$-consistency and we thus get $\emb_i^{\{j\}} = \iso^{2 \rightarrow j}(\emb_i^{\{2\}})$. 
  Because $l \geq 3$, we also have $\emb_i^{\{2,3\}} = \iso^{(1,2) \rightarrow (2,3)}(\emb_i^{\{1,2\}})$ and thus $\emb_i^{\{2\}} = \iso^{1 \rightarrow 2}(\emb_i^{\{1\}})$.
  Hence $\embU$ is also $1$-consistent with respect to $\mc K$.

     Because $\embU$ is $2$-consistent with respect to $\mc K$, $\embU^{[l-1]}$ is $2$-consistent with respect to $\mc K^{[l-1]}$ by definition.
  Moreover, we have for all $j \in [l-1]$ and all $x \in \att_i(\mc K)$ that $\emb_i^{\{j, l\}}(x) = \iso^{(1,2)\rightarrow(j,l)}(\emb_i^{\{1,2\}}(x))$.
  Hence the two rotations coincide under $\iso^{(1,2)\rightarrow(j,l)}$.
  Since $\emb_i^{\{1,2\}}(x)$ satisfies $\tphix^{2 \rightarrow 1}$ by definition, $\emb_i^{\{j, l\}}(x)$ thus satisfies $\phix^{l \rightarrow j}$, which also coincides with $\tphix^{2 \rightarrow 1}$ under $\iso^{(1,2)\rightarrow(j,l)}$ by definition.

  $(\Leftarrow)\colon$
     Let $i \in [k]$ and let $1 \leq j_1 < j_2 \leq l$.
  To show $2$-consistency, we show that $\emb_i^{\{j_1,j_2\}}(\iso^{(1,2)\rightarrow(j_1,j_2)}(x)) = \iso^{(1,2)\rightarrow(j_1,j_2)}(\emb_i^{\{1,2\}}(x))$ for all $x \in V(G_i^{\{1,2\}})$.

  If $x \in B^1 \cup (X \setminus \att_i(\mc K))$, all edges incident to $x$ in $G_i$ belong to $G_i^{\{1\}}$ and all edges incident to $\iso^{1 \rightarrow j_1}(x)$ belong to $G_i^{\{j_1\}}$.
  Hence $1$-consistency yields $\iso^{(1,2)\rightarrow(j_1,j_2)}(\emb_i^{\{1,2\}}(x)) = \iso^{(1,2)\rightarrow(j_1,j_2)}(\emb_i^{\{1\}}(x)) = 
  \iso^{1\rightarrow j_1}(\emb_i^{\{1\}}(x)) = \emb_i^{\{j_1\}}(\iso^{1 \rightarrow j_1}(x)) = \emb_i^{\{j_1,j_2\}}(\iso^{(1,2)\rightarrow(j_1,j_2)}(x))$ as desired.
  The case $x \in B^2$ is analogous.

  Now consider the case $x \in \att_i(\mc K)$.
  Because $\att_i(\mc K) \subseteq X$, we have ${\iso^{(1,2)\rightarrow(j_1,j_2)}(x) = x}$.
        If $j_2 < l$, the statement follows immediately from the $2$-consistency of $\embU^{[l-1]}$ with respect to $\mc K^{[l-1]}$.
  Consider the case $j_2 = l$.
  Note that we have that $\emb_i^{\{j_1,l\}}(x)$ satisfies $\phix^{l \rightarrow j_1}$ by assumption and $\emb_i^{\{1,2\}}(x)$ satisfies $\tphix^{2 \rightarrow 1}$ by definition.
  By \Cref{lm:merges}, $\emb_i^{\{j_1,l\}}(x)$ is uniquely defined by $\emb_i^{\{j_1\}}(x)$, $\emb_i^{\{l\}}(x)$, and $\phix^{l \rightarrow j_1}$.
  Analogously, $\emb_i^{\{1,2\}}(x)$ is uniquely defined by $\emb_i^{\{1\}}(x)$, $\emb_i^{\{2\}}(x)$, and $\tphix^{2 \rightarrow 1}$.
  Due to $1$-consistency, we have that $\emb_i^{\{j_1\}}(x) = \iso^{(1,2) \rightarrow (j_1, l)}(\emb_i^{\{1\}}(x))$ and $\emb_i^{\{l\}}(x) = \iso^{(1,2) \rightarrow (j_1, l)}(\emb_i^{\{2\}}(x))$.
  Moreover, since we have $\phix^{l \rightarrow j_1} = \iso^{(1,2) \rightarrow (j_1, l)} \circ \tphix^{2 \rightarrow 1} \circ \iso^{(j_1,l) \rightarrow (1, 2)}_{\vert A_i^{l}(x)}$ by definition, the three ingredients that uniquely determine $\emb_i^{\{j_1,l\}}(x)$ and $\emb_i^{\{1,2\}}(x)$, respectively, coincide under $\iso^{(1,2) \rightarrow (j_1,l)}$.
  Hence we have $\emb_i^{\{j_1,l\}}(x) = \iso^{(1,2) \rightarrow (j_1,l)}(\emb_i^{\{1,2\}}(x))$, which concludes the proof.
\end{prooflater}

We remark that $2$-consistency does not imply $1$-consistency for $l = 2$.

By \Cref{lm:twoconsistent}, to construct the desired embedding $\emb_i$ of $G_i$, we thus have to merge the rotation systems $\emb_i^{[l-1]} = \embP_i$ of $G_i^{[l-1]}$ and $\emb_i^{\{l\}} = \kappa^{1\rightarrow l}({\embP_i}^{\{1\}})$ of $G_i^{\{l\}}$ at every modulator vertex $x \in \att_i(\mc K)$ in such a way that the induced angle-assignment from $A^l_i(x)$ to $\emb_i^{[l-1]}(x)$ refines $\phix^{l \rightarrow 1}, \dots, \phix^{l \rightarrow l-1}$.
 Our next goal is to show that such an angle-assignment exists.
While we only need this for $l \geq |X| + 4$ to construct the desired embedding, we also show that such an angle-assignment is unique for $l \geq 3$, if it exists.
We will use this later to show that $2$-consistency implies $m$-consistency for all $m \in [l]$.

\begin{figure}
    \centering
    \begin{subfigure}{0.24\textwidth}
      \centering
      \includegraphics[page=1]{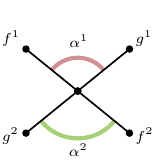}
      \caption{}
      \label{fig:intersectionA}
    \end{subfigure}
    \hfill
    \begin{subfigure}{0.24\textwidth}
      \centering
      \includegraphics[page=2]{figures/angleIntersection}
      \caption{}
      \label{fig:intersectionB}
    \end{subfigure}
    \hfill
    \begin{subfigure}{0.24\textwidth}
      \centering
      \includegraphics[page=3]{figures/angleIntersection}
      \caption{}
      \label{fig:intersectionC}
    \end{subfigure}
    \hfill
    \begin{subfigure}{0.24\textwidth}
      \centering
      \includegraphics[page=4]{figures/angleIntersection}
      \caption{}
      \label{fig:intersectionD}
    \end{subfigure}
    \caption{Four configurations for two angles $\alpha^1 = (f^1, g^1)$ and $\alpha^2 = (f^2, g^2)$.
    In \textbf{(b)-(c)}, the blue angle highlights the unique intersection of $\alpha^1, \alpha^2, \dots, \alpha^{l-1}$.}
    \label{fig:angleIntersection}
\end{figure}
  
\begin{restatable}\restateref{lm:angleAssignment}{lemma}{lemmaAngleAssignment}
  \label{lm:angleAssignment}
  Let $\mc I = \Instance$ be an exclusive-connected instance of \SEFE.
  Let $\mc K$ be a blob-partition of size $l \geq 3$ of $\mc I$ whose modulator is extended and let $\embPU$ be a simultaneous rotation system of $\mc I^{[l-1]}$ that is $1$- and $2$-consistent with respect to $\mc K^{[l-1]}$.
  Let $i \in [k]$ and $x \in \att_i(\mc K)$.
  Then we have the following.
  \begin{enumerate}
      \item If $l \geq 3$, there exists at most one direct angle-assignment $\phix$ from $A^{l}_i(x)$ to $\embP_i(x)$ that refines $\phix^{l \rightarrow 1}, \dots, \phix^{l \rightarrow l-1}$.
      \item If $l \geq 4$, such an angle-assignment $\phix$ exists. Moreover, $\phix$ is a tidy refinement of $\phix^{l \rightarrow 1}, \dots, \phix^{l \rightarrow l-1}$.
                \end{enumerate}
\end{restatable}
\begin{proofsketch}
    For each edge $e^{l} \in A_i^{l}(x)$, we consider the angles $\alpha^1 = \phix^{l \rightarrow 1}(e^l), \dots, \alpha^{l-1} = \phix^{l \rightarrow l-1}(e^l)$ in $\embP_i(x)$ to show that there exists at most one direct angle that lies inside all of them (and it exists for $l \geq 4$).
    To this end, we first consider the two angles $\alpha^1$ and $\alpha^2$ as intervals in a cyclic order and distinguish whether they (a) are disjoint or touch only at their endpoints, (b) are nested, (c) overlap once, or (d) overlap twice; see \Cref{fig:angleIntersection}.

    While the last case yields two disjoint intervals that lie inside of $\alpha^1$ and $\alpha^2$, we can use the fact that $X$ is an extended modulator to show that this case cannot occur. As $\alpha^1$ and $\alpha^2$ are direct angles in $\embP_i^{\{1\}}$ and $\embP_i^{\{2\}}$, respectively, the corresponding modulator edge cannot lie inside either of them. 
    But in case (d), every edge lies in one of the two angles, a contradiction.

    Due to $2$-consistency, the remaining angles $\alpha^3, \dots, \alpha^{l-1}$ must adhere to the same pairwise pattern as $\alpha^1$ and $\alpha^2$.
    In the first three cases, we thus find at most one direct angle that lies inside $\alpha^1, \dots \alpha^{l-1}$, which gives the first statement.

    For $l \geq 4$, the definition of $\phix^{l \rightarrow 1}$ via $\tphix^{2 \rightarrow 1}$ tells us that $\alpha^1$ contains the edge $e^2$.
    Due to $2$-consistency, the edge $e^3$ thus lies inside both $\alpha^1$ and $\alpha^2$ in $\embP_i(x)$.
    Hence $\alpha^1$ and $\alpha^2$ cannot have disjoint interiors and thus case (a) cannot occur and we always find the desired direct angle that lies inside $\alpha^1, \dots, \alpha^{l-1}$.
    Note that we can only use this argument if $l - 1 \geq 3$.
    We can additionally confirm the tidiness of the refinements, which shows the second statement.
\end{proofsketch}
\begin{prooflater}{plemmaAngleAssignment}
  Let $e^{l} \in A_i^{l}(x)$. 
  We consider the angles $\phix^{l \rightarrow 1}(e^l), \dots \phix^{l \rightarrow l-1}(e^l)$ in $\embP_i(x)$.
  Let $j \in [l-1]$ and let $\alpha^j =(f^j, {g}^j) \coloneqq \phix^{l\rightarrow j}(e^l)$.
  Recall that
  \[ \alpha^j = \phix^{l \rightarrow j}(e^l) = \iso^{(1,2) \rightarrow (j,l)}(\tphix^{2 \rightarrow 1}(\iso^{(j,l) \rightarrow (1,2)}(e^l))) = \iso^{(1,2) \rightarrow (j,l)}(\tphix^{2 \rightarrow 1}(e^2)) = \iso^{1 \rightarrow j}(\tphix^{2 \rightarrow 1}(e^2)), \]
  the different $\alpha^j$ thus coincide under isomorphism.
  Note that, while $\alpha^j$ is a direct angle of $\embP^{\{j\}}_i(x)$, it is not necessarily a direct angle of $\embP_i(x)$.
  For every $j' \in [l-1]\setminus \{j\}$, recall that $f^{j'} = \iso^{j\rightarrow j'}(f^j)$ and thus $f^{j'} = f^j$ if and only if $f^j$ is a modulator edge.
  The analogous holds for ${g}^j$.

  Consider the angles $\alpha^1 = \phix^{l\rightarrow 1}(e^l)$ and $\alpha^2 = \phix^{l\rightarrow 2}(e^l)$ in $\embP_i(x)$.  
       We distinguish cases based on whether $\alpha^1$ and $\alpha^2$, which are intervals in a cyclic order, (a) are disjoint or touch only at their endpoints, (b) are nested, (c) overlap once, or (d) overlap twice; see \Cref{fig:angleIntersection}.
  
  In case (a), there exists no angle that lies inside $\alpha^1 = \phix^{l \rightarrow 1}(e^l)$ and $\alpha^2 = \phix^{l \rightarrow 2}(e^l)$ and thus $ \phix^{l \rightarrow 1}$ and $ \phix^{l \rightarrow 2}$ do not have a common refinement.
  
  In case (b), we assume without loss of generality that $\alpha^2$ lies inside $\alpha^1$, i.e., their corresponding edges appear in the clockwise order $f^1, f^2, g^2, g^1$, where possibly $f^1 = f^2$ or $g^1 = g^2$.
  Since the corresponding angles $\alpha^3, \dots, \alpha^{l-1}$ must adhere to the same pairwise pattern due to $2$-consistency, their edges appear in the clockwise order $f^1, f^2, \dots, f^{l-1}, g^{l-1}, \dots g^2, g^1$ (where possibly $f^1 = f^2 = \dots = f^{l-1}$ or $g^1 = g^2 = \dots = g^{l-1}$).
  We claim that the angle $\alpha_i(e^{l}) \coloneqq \alpha^{l-1} = (f^{l-1}, g^{l-1})$ is direct.
  Assume for the sake of contradiction that there is an edge $h$ that lies between $f^{l-1}$ and $g^{l-1}$ in $\embP^{[l-1]}_i(x)$.
  Note that $h$ is contained in each angle $\alpha^1, \dots, \alpha^{l-1}$.
  But since $h$ belongs to graph ${G_i}^{\{j\}}$ for some $j \in \{1, \dots, l-1\}$ (or to all of them, if $h$ is a modulator edge), this contradicts the fact that $\alpha^{j}$ is a direct angle in $\embP_i^{\{j\}}(x)$.
  Thus $\alpha_i(e^{l})$ is a direct angle in $\embP^{[l-1]}_i(x)$.
  Moreover, observe that it is the only direct angle that lies inside $\alpha^1, \dots, \alpha^{l-1}$.

  Note that $f^1 = f^2$ or $g^1 = g^2$ implies case (b); we thus assume for the remaining two cases that $f^1 \neq f^2$ and $g^1 \neq g^2$.
  Hence none of these edges is a modulator edge, i.e., $f^1, g^1 \in A_i^1(x)$ and $f^2, g^2 \in A_i^2(x)$ and thus also $f^1 \neq g^2$ and $f^2 \neq g^1$.
  All four edges are therefore distinct.

  In case (c), assume without loss of generality that their edges appear in the circular order $f^1, f^2, g^1, g^2$ in $\embP^{[l-1]}_i(x)$.
  As in case (b), $2$-consistency guarantees that the corresponding angles $\alpha^3, \dots, \alpha^{l-1}$ must adhere to the same pattern and thus their edges appear in the clockwise order $f^1, f^2, \dots, f^{l-1}, g^1, g^2, \dots g^{l-1}$ in $\embP^{[l-1]}_i(x)$.
     Using the same argument as in case (b), we obtain that the angle $\alpha_i(e^{l}) \coloneqq (f^{l-1}, g^1)$ is the unique direct angle of $\embP^{[l-1]}_i(x)$ that lies inside $\alpha^1, \dots, \alpha^{l-1}$.

  Finally, consider case (d). 
  Assume without loss of generality that the edges of $\alpha^1$ and $\alpha^2$ appear in the circular order  $f^1, g^2, f^2, g^1$ in $\embP^{[l-1]}_i(x)$.
  This implies that every element of $\embP^{[l-1]}_i(x)$ lies inside $\alpha^1$ or $\alpha^2$.
  Since the modulator $X$ is extended, $x$ is incident to a modulator edge $m \in X_i(x)$.
  Hence $\embP^{\{1\}}_i(x)$ and $\embP^{\{2\}}_i(x)$ both contain $m$.
  Since $\alpha^1$ and $\alpha^2$ are direct angles in $\embP^{\{1\}}_i(x)$ and $\embP^{\{2\}}_i(x)$, $m$ cannot lie inside $\alpha^1$ or $\alpha^2$ in $\embP^{[l-1]}_i(x)$, a contradiction.

  \medskip

  Hence case (d) cannot occur.  
  In each of the other cases, the direct angle $\phix(e^l) = \alpha_i(e^l)$ is uniquely determined, if it exists.
  Hence $\phix$ is unique.
  This shows the first statement.
  Moreover, if this angle exists for each edge $e^l \in A_i^l(x)$, i.e., case (a) never occurs, then $\phix$ is the desired direct angle-assignment.

  Now assume $l \geq 4$.
  Recall that $\alpha^1 = \kappa^{1\rightarrow 1}(\tphix^{2 \rightarrow 1}(e^2)) = \tphix^{2 \rightarrow 1}(e^2)$ and thus $e^2$ lies inside $\alpha^1$ in $\embP_i(x)$.
  Since $\embPU$ is $2$-consistent, it follows that $e^3$ lies inside $\alpha^1$ and $\alpha^2$ in $\embP_i(x)$.
  This excludes case (a) and therefore proves that $\phix$ exists.
  It thus only remains to prove the tidiness of the refinements.
  Let $j \in [l-1]$ and let $e_1^l, e_2^l \in A^{l}_i(x)$.
  Assume that $\phix^{l \rightarrow j}(e_1^l) = \phix^{l \rightarrow j}(e_2^l)$ holds.
  By the definition of $\phix^{l \rightarrow j}$, we obtain $\tphix^{2 \rightarrow 1}(e_1^2) = \tphix^{2 \rightarrow 1}(e_2^2)$ and thus also $\phix^{l \rightarrow j'}(e_1^l) = \phix^{l \rightarrow j'}(e_2^l)$ for all $j' \in [l-1]$.
  Since $\phix(e_1^l)$ is the unique direct angle that lies inside $\phix^{l \rightarrow 1}(e_1^l), \dots, \phix^{l \rightarrow l-1}(e_1^l)$ and $\phix(e_2^l)$ is the unique direct angle that lies inside $\phix^{l \rightarrow 1}(e_2^l), \dots, \phix^{l \rightarrow l-1}(e_2^l)$, it follows that $\phix(e_1^l) = \phix(e_2^l)$.
  Thus $\phix$ is a tidy refinement of $\phix^{l \rightarrow j}$.
  This proves the second statement.
\end{prooflater}

With this result, we can now show that $\embPU$ can be extended to the desired $2$-consistent simultaneous rotation system $\embU$ of \GU.
For each $i \in [k]$ and each $x \in \att_i(\mathcal K)$, we use the uniquely defined tidy refinement $\phix$ from \Cref{lm:angleAssignment} (which exists, as $l \geq 4)$ to merge $\emb_i^{[l-1]}(x) = {\embP_i}^{[l-1]}(x)$ and $\emb_i^{\{l\}}(x) = \iso^{1 \rightarrow l}(\embP_i^{\{1\}})(x)$ into one circular order $\emb_i(x)$.
To this end, we use the embedding induced by the first two blobs as a witness (under isomorphism) that there exists a circular ordering that satisfies $\phix^{l \rightarrow 1}$.
We can then use the fact that $\phix$ is a tidy refinement of $\phix^{l \rightarrow 1}$ to show that the desired circular ordering $\emb_i(x)$ exists.

\begin{restatable}\restateref{lm:twoConsEmb}{lemma}{lemmaTwoConsEmb}
\label{lm:twoConsEmb}
  Let $\mc I = \Instance$ be an exclusive-connected instance of \SEFE.
  Let $\mc K$ be a blob-partition of size $l \geq 4$ of $\mc I$ whose modulator is extended.
  Moreover, let $\embPU$ be a simultaneous rotation system of $\mc I^{[l-1]}$ such that $\embPU$ is $2$-consistent with respect to $\mc K^{[l-1]}$.
  
  There exists a simultaneous rotation system $\embU$ of $\mc I$ that coincides with $\embPU$ on $\GU^{[l-1]}$ and such that \embU is $2$-consistent with respect to $\mc K$.
\end{restatable}
\begin{prooflater}{plemmaTwoConsEmb}
  For each $i \in [k]$, we construct a rotation system $\emb_i$ of $G_i$ as follows.   
  Recall the characterization of $2$-consistency from \Cref{lm:twoconsistent}.
  In particular, $\embU$  must be $1$-consistent with respect to $\mc K$ and thus it is necessary that $\emb_i^{\{l\}} = \iso^{1 \rightarrow l}(\embP_i^{\{1\}})$.
  Moreover, since $\embU$ must coincide with $\embPU$ on $\GU^{[l-1]}$, it is also necessary that $\emb_i^{[l-1]} = {\embP_i}^{[l-1]}$.
  We combine these two embeddings to a single rotation system $\emb_i$ as follows.
  Note that $\emb_i^{\{l\}}$ and $\emb_i^{[l-1]}$ already uniquely define $\emb_i(x)$ for all $x \notin \att_i(\mc K)$.
  It remains to specify how to merge the circular orderings $\emb_i^{[l-1]}(x)$ and $\emb_i^{\{l\}}(x)$ at each vertex $x \in \att_i(\mc K)$ into a circular ordering $\emb_i(x)$ in such a way that $2$-consistency is preserved.
  
  By \Cref{lm:twoconsistent}, it is necessary that $\emb_i(x)$  satisfies the angle-assignments $\phix^{l \rightarrow 1}, \dots \phix^{l \rightarrow l-1}$.
  In other words, the direct angle-assignment from $A_i^l(x)$ to $\emb_i^{[l-1]}(x)$ induced by $\emb_i(x)$ must be a refinement of $\phix^{l \rightarrow 1}, \dots \phix^{l \rightarrow l-1}$.
  By \Cref{lm:angleAssignment}, there exists a unique direct angle-assignment $\phix$ that satisfies this property.
  Since the modulator is extended, by \Cref{lm:merges} there is at most one circular ordering $\emb_i(x)$ of the edges incident to $x$ in $G_i$ that extends both $\emb_i^{[l-1]}(x) = \embP_i^{[l-1]}(x)$ and $\emb_i^{\{l\}}(x) = \iso^{1 \rightarrow l}(\embP_i^{\{1\}}(x))$ and additionally satisfies $\phix$.
  We still have to show that this ordering exists.

  Since the resulting circular ordering must satisfy $\phix$, we must place the missing edges $A_i^l(x)$ into their designated direct angles of $\emb_i^{[l-1]}(x)$.
  However, we have to ensure that we do this consistently with $\emb_i^{\{l\}}(x)$.
  To this end, we use the circular ordering $\tau^{\{1,l\}} \coloneqq  \iso^{(1,2) \rightarrow (1,l)}(\embP_i^{\{1,2\}}(x))$ induced for blobs $1$ and $l$ as guidance.
  Note that $\tau^{\{1,l\}}$ satisfies $\phix^{l \rightarrow 1}$.

  Let $e_1, e_2 \in A_i^l(x)$.
  Because $\phix$ refines $\phix^{l \rightarrow 1}$ and each direct angle of $\emb_i^{[l-1]}(x)$ lies inside exactly one direct angle of $\emb_i^{\{1\}}(x)$, we have $\phix(e_1) = \phix(e_2) \Rightarrow \phix^{l \rightarrow 1}(e_1) = \phix^{l \rightarrow 1}(e_2)$.
  Since the refinement is tidy, the converse also holds and thus $\phix(e_1) = \phix(e_2) \Leftrightarrow \phix^{l \rightarrow 1}(e_1) = \phix^{l \rightarrow 1}(e_2)$.
  Hence $\phix$ and $\phix^{l \rightarrow 1}$ induce a unique bijection $f \colon \Ima(\phix) \to \Ima(\phix^{l \rightarrow 1})$ with $\phix^{-1}(\alpha) = (\phix^{l \rightarrow 1})^{-1}(f(\alpha))$ for all $\alpha \in \Ima(\phix)$.
  
  We now construct the desired circular ordering.
  For an angle $\alpha \in \Ima(\phix)$, let $\pi_\alpha$ be the linear order in which the edges $\phix^{-1}(\alpha) =  (\phix^{l \rightarrow 1})^{-1}(f(\alpha))$ appear inside $f(\alpha)$ in $\tau^{\{1,l\}}$.
  Let $\sigma_{i,x}$ be the circular ordering obtained from $\emb_i^{[l-1]}(x)$ by inserting, for each $\alpha \in \Ima(\phix)$, the elements of $\phix^{-1}(\alpha) =  (\phix^{l \rightarrow 1})^{-1}(f(\alpha))$ inside $\alpha$ in the order $\pi_\alpha$.
  Because $\alpha$ lies inside $f(\alpha)$ and $f$ is a bijection, this construction guarantees that the restriction of $\sigma_{i,x}$ to the edges of $G_i^{\{1,l\}}$ equals $\tau^{\{1,l\}}$.
  Since $\tau^{\{1,l\}} = \iso^{(1,2) \rightarrow (1,l)}(\embP_i^{\{1,2\}}(x))$, restricting $\sigma_{i,x}$ to the edges of $G_i^{\{l\}}$ thus yields $\iso^{2 \rightarrow l}(\embP_i^{\{2\}}(x))$, which equals $\iso^{1 \rightarrow l}(\embP_i^{\{1\}}(x)) = \emb_i^{\{l\}}(x)$ by $1$-consistency of~$\embPU$.
  Moreover, $\sigma_{i,x}$ extends $\emb_i^{[l-1]}(x)$ and satisfies $\phix$ (and thus also $\phix^{l \rightarrow 1}, \dots \phix^{l \rightarrow l-1}$) by construction.
  
  Now set $\emb_i(x) \coloneqq \sigma_{i,x}$ for all $i \in [k]$ and $x \in \att_i(\mathcal K)$.
  Then we have that $\embU$ is $2$-consistent by \Cref{lm:twoconsistent}: $\embU$ is $1$-consistent and satisfies the corresponding angle-assignments by the above, and $\embU^{[l-1]} = \embPU$ is $2$-consistent by assumption.
  This concludes the proof.
\end{prooflater}

To simplify the final proof, we first show that $2$-consistency implies $m$-consistency for all $m \in [l]$.
We do this inductively on $m$ and in each step show that any $m$ blobs induce the same rotation in each input graph.
To this end, we again use the fact that the angle-assignments from the last of these blobs to the rotation of the other $m-1$ blobs have a unique common refinement by \Cref{lm:angleAssignment}.
This refinement is identical (under isomorphism) for any $m$ blobs and thus \Cref{lm:merges}, together with ($m\!-\!1$)-consistency, yields that the combined rotation also coincides under isomorphism, implying $m$-consistency.

\begin{restatable}\restateref{lm:m-consistent}{lemma}{lemmaMConsistent}
\label{lm:m-consistent}
  Let $\mc I = \Instance$ be an exclusive-connected instance of \SEFE.
  Let $\mc K$ be a blob-partition of size $l \geq 3$ of $\mc I$ whose modulator is extended and let $\embU$ be a simultaneous rotation system of $\mc I$.
  If $\embU$ is $2$-consistent with respect to $\mc K$, then $\embU$ is $m$-consistent with respect to $\mc K$ for all $m \in [l]$.
\end{restatable}
\begin{prooflater}{plemmaMConsistent}
  The proof is by induction on $m$.
     Note that $1$-consistency follows from $2$-consistency by \Cref{lm:twoconsistent}.
  Now let $m \geq 3$ and assume that \embU is $(m\!-\!1)$-consistent with respect to $\mc K$.
  We show that  \embU is also $m$-consistent with respect to $\mc K$.
  To this end, let $(j_1, \dots, j_m)$ be a strictly increasing sequence over $[l]$ and let $\iso \coloneqq \iso^{(1, \dots, m) \rightarrow (j_1, \dots, j_m)}$.
  Let $S = \{j_1, \dots, j_m\}$ and $S^- = S \setminus \{j_m\}$.
  We want to show for each $i \in [k]$ and each $x \in V(G_i^{[m]})$ that $\emb_i^{S}(\iso(x)) = \iso(\emb_i^{[m]}(x))$.
  As above, this holds for $x \in V(G_i^{[m]}) \setminus \att_i(\mc K)$ due to $1$-consistency.
  
  Consider $x \in \att_i(\mc K)$ and recall that $x = \iso(x)$ in this case.
  Let $\phix^{j_m \rightarrow S^-}$ denote the direct angle-assignment from $A_i^{j_m}(x)$ to $\emb_i^{S^-}(x)$ induced by $\emb_i^S(x)$.
  Because $\embU$ is $2$-consistent with respect to $\mc K$, we have that $\emb_i^S(x)$ satisfies $\phix^{j_m \rightarrow j_1}, \dots, \phix^{j_m \rightarrow j_{m-1}}$ by \Cref{lm:twoconsistent}\footnote{When applying Lemma~\ref{lm:twoconsistent} or~\ref{lm:angleAssignment} to a subinstance $\mathcal I^S$, we implicitly re-index the blobs of $\mathcal K^S$ order-preservingly to $[|S|]$. 
  Due to the $2$-consistency of $\embU$, the definition of $\phix^{j_m \rightarrow j_1}, \dots, \phix^{j_m \rightarrow j_{m-1}}$ based on the angle-assignment $\tphix^{2 \rightarrow 1}$ induced by the first two blobs still coincides in $\mathcal I^S$ and $\mathcal I$.
  } and thus $\phix^{j_m \rightarrow S^-}$ refines $\phix^{j_m \rightarrow j_1}, \dots, \phix^{j_m \rightarrow j_{m-1}}$.
                 
  We analogously obtain that the direct angle-assignment $\phix^{m \rightarrow [m-1]}$ from $A_i^m(x)$ to $\emb_i^{[m-1]}$ induced by $\emb_i^{[m]}(x)$ refines $\phix^{m \rightarrow 1}, \dots, \phix^{m \rightarrow m-1}$.
  Applying \Cref{lm:angleAssignment} to the two subinstances, $\phix^{j_m \rightarrow S^-}$ and $\phix^{m \rightarrow [m-1]}$ are the only such angle-assignments that satisfy this refinement property.
  Since the angle-assignments $\phix^{j_m \rightarrow j_1}, \dots, \phix^{j_m \rightarrow j_{m-1}}$ coincide with the angle-assignments  $\phix^{m \rightarrow 1}, \dots, \phix^{m \rightarrow m-1}$ under $\iso$ by definition, $\phix^{j_m \rightarrow S^-}$ and $\phix^{m \rightarrow [m-1]}$ also coincide under $\iso$.
  Moreover, we have $\emb_i^{S^-}(x) = \iso(\emb_i^{[m-1]}(x))$ and $\emb_i^{\{j_{m}\}}(x) = \iso(\emb_i^{\{m\}}(x))$ by $(m\!-\!1)$-consistency.
  By \Cref{lm:merges}, $\emb_i^{S}(x)$ is uniquely determined by $\emb_i^{S^-}(x)$, $\emb_i^{\{j_{m}\}}(x)$, and $\phix^{j_m \rightarrow S^-}$ and $\emb_i^{[m]}(x)$ is uniquely determined by $\emb_i^{[m-1]}(x)$, $\emb_i^{\{m\}}(x)$, and $\phix^{m \rightarrow [m-1]}$.
  Hence the three ingredients that uniquely define $\emb_i^{S}(x)$ and $\emb_i^{[m]}(x)$, respectively, coincide under $\iso$.
  We thus have  $\emb_i^{S}(x) = \iso(\emb_i^{[m]}(x))$, which concludes the proof.
\end{prooflater}

Using \Cref{lm:m-consistent}, we can now show that the $2$-consistent rotation system $\embU$ from \Cref{lm:twoConsEmb} is a simultaneous embedding.
We use the fact that non-planarity (\Cref{lm:planarityCharacterization}) and inconsistent embeddings are witnessed by a small number of blobs.
Due to \Cref{lm:twoConsEmb,lm:m-consistent}, every proper subset of blobs induces a simultaneous embedding, and we can conclude that the same also holds for the entire instance.

\begin{lemma}
\label{lm:simEmb}
  Let $\mc I = \Instance$ be an exclusive-connected instance of \SEFE.
  Let~$\mc K$ be a blob-partition of $\mc I$ with an extended modulator $X$ and size $l \geq |X| + 3$.
  Moreover, let $\embU$ be a simultaneous rotation system of $\mc I$ such that $\embU$ is $2$-consistent with respect to~$\mc K$.
  If $\embU^{[l-1]}$ is a simultaneous embedding of $\mc I^{[l-1]}$, then  $\embU$ is a simultaneous embedding of~$\mc I$.
\end{lemma}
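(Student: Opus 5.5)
We prove the contrapositive: an obstruction in $\embU$ (to planarity of some $\emb_i$, or to consistency on the shared graph) will be localized to a set $S \subsetneq [l]$ of blobs. By $m$-consistency (\Cref{lm:m-consistent}) we may then move $S$ into $[l-1]$ and obtain an obstruction in $\embU^{[l-1]}$, contradicting the hypothesis. Since $\mc I$ is exclusive-connected, each $G_i$ is connected, so $\emb_i$ is a planar embedding iff its rotation system is planar. Moreover, the tuple is a simultaneous embedding iff each $\emb_i$ is planar and the $\emb_i$ induce the same embedding of $G_\cap$.

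\textbf{Consistency on the shared graph.} Suppose $\emb_i$ and $\emb_{i'}$ induce different rotations of shared edges at some vertex $v$. First we reduce to a local witness. If $v \notin X$, then $v$ lies in some blob $B^j$, and its rotation in both graphs is determined by $\emb^{\{j\}}$. If $v \in X$, then three shared edges at $v$ appear in different cyclic orders in the two rotations; these three edges lie in at most three blobs. In either case the inconsistency is visible in $\embU^{S}$ for some $S$ with $|S| \leq 3 < l$. Next we shift the witness. Choose an increasing sequence $s$ over $[l-1]$ with $|s| = |S|$. By $|S|$-consistency, $\embU^{s}$ is the image of $\embU^{S}$ under $\iso$, and $\iso$ is a simultaneous isomorphism. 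Hence $\embU^{s}$ is also inconsistent on shared edges, contradicting the hypothesis on $\embU^{[l-1]}$. (Relative positions play no role, since all subinstances are exclusive-connected; this uses \Cref{lm:connectedSubinstances}.)

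\textbf{Planarity.} Suppose $\emb_i$ is not planar. By \Cref{lm:planarityCharacterization} there are a simple cycle $C$ of $G_i$, a bridge $B$ of $C$, and fixtures $a_1, a_2$ of $B$ on different sides of $C$. Take a path $P$ in $B$ connecting the fixture edges of $a_1$ and $a_2$, internally avoiding $C$. This is the step I expect to be the main obstacle: bounding the number of blobs touched by $C \cup P$, because a path may wander through many blobs.

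\begin{itemize}
\item Each maximal segment of $C$ or $P$ outside $X$ lies inside a single blob and is entered and left through $X$.
\item $C$ is simple, so it visits each vertex of $X$ at most once. Hence $C$ enters at most $|X|$ blob segments.
\item $P$ is simple and disjoint from $C$ except at its ends. Together, $C$ and $P$ use each vertex of $X$ at most once, so the total number of blob segments of $C \cup P$ is at most about $|X| + 2$.
\end{itemize}

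I would aim for the bound: $C \cup P$ meets at most $|X| + 2 \leq l - 1$ blobs; this is where $l \geq |X| + 3$ is used. Let $S$ be this set of blobs. In $G_i^S$, $C$ is still a simple cycle and $P$ still lies in a single bridge of $C$, since $P$ is connected and avoids $C$. The fixtures $a_1, a_2$ are unchanged, and their sides relative to $C$ depend only on the rotations restricted to edges of $G_i^S$, namely $\emb_i^S$. So $\emb_i^S$ violates \Cref{lm:planarityCharacterization}, i.e., it is non-planar. Subinstances with $|S| < l$ are exclusive-connected, so non-planarity is meaningful there.

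Finally, transport by $|S|$-consistency: the images of $C$, $P$, $a_1$ and $a_2$ under $\iso^{S\to s}$ give the same violation in $\emb_i^{s}$ with $s \subseteq [l-1]$. Thus $\emb_i^{[l-1]}$ contains a cycle whose bridge has fixtures on both sides, so it is non-planar, which is a contradiction. One point needs care: the fixture sides at vertices of $X$ are computed from rotations that involve edges of several blobs. That is exactly why full $m$-consistency, not just $2$-consistency, is needed, and \Cref{lm:m-consistent} supplies it.
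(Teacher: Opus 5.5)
\textbf{Gap: consistency of relative positions is never proved.} Your planarity argument and your check of the cyclic order of shared edges are correct. They follow the paper's route in contrapositive form: the paper first shows via $|S|$-consistency that $\embU^S$ is a simultaneous embedding for every nonempty $S \subsetneq [l]$, then localizes obstructions. Your bound of $|X|+2$ blobs for $C \cup P$ is looser than the paper's $|X|+1$, but it suffices.

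The gap is your parenthetical claim that relative positions play no role. Exclusive-connectedness makes each $G_i$ connected, so each $\emb_i$ is determined by its rotation system. But the criterion you correctly stated at the start is that the $\emb_i$ induce the \emph{same embedding of $G_\cap$}, and that concerns the shared graph. The shared graph may be disconnected; the paper stresses that the gadgets of \Cref{lm:exclconnected} do not connect it. For disconnected $G_\cap$, this condition includes the relative positions of its components, which the cyclic orders of shared edges do not determine.

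For example, let $G_\cap$ be a triangle $abc$ plus a vertex $w$, with exclusive edges $aw \in E_1$ and $bw \in E_2$. Embed $w$ inside the triangle in $\emb_1$ and outside in $\emb_2$. Both are planar embeddings of connected graphs, and their shared rotations trivially agree, yet they do not form a simultaneous embedding. So your two steps leave the proof incomplete.

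\textbf{How to close it.} Your own localize-and-transport scheme handles the missing step, and this is what the paper does. After establishing that every $\emb_i$ is planar, take distinct components $H, H'$ of $G_\cap$, a simple cycle $C$ of $H$, and a vertex $v$ of $H'$.
\begin{itemize}
\item The cycle $C$ meets at most $|X|+1$ blobs and $v$ at most one further blob. So $C$ and $v$ lie in $\GU^S$ for some $S$ with $|S| \leq |X|+2 < l$.
\item The side of $v$ with respect to $C$ in the planar embedding $\emb_i$ equals its side in the restriction $\emb_i^S$.
\item $\embU^S$ is a simultaneous embedding; equivalently, transport via $|S|$-consistency into $[l-1]$ and use the hypothesis. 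Hence $v$ lies on the same side of $C$ in every $\emb_i$.
\item This holds for every simple cycle of $H$, so $v$ lies in the same face of $H$ in all $\emb_i$. This gives consistent relative positions.
\end{itemize}
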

\begin{proof}
              We start with the following observation.
    \begin{claim}
    \label{cl:subsets}
        For each non-empty $S \subsetneq [l]$, $\embU^S$ is a simultaneous embedding of $\mc I^S$.
    \end{claim}
    \begin{claimproof}
        Since \embU is $2$-consistent, it is also $|S|$-consistent by \Cref{lm:m-consistent}.
        Because $[|S|] \subseteq [l-1]$, we have that $\embU^{[|S|]}$ is a simultaneous embedding by assumption and thus $|S|$-consistency immediately implies that $\embU^S$ is a simultaneous embedding of $\mc I^S$.
    \end{claimproof}

    In order to show that $\embU$ is a simultaneous embedding, we use the fact that obstructions are contained in a small subset of blobs.   
    \begin{restatable}\restateref{cl:smallObstruction}{claim}{claimSmallObstruction}
    \label{cl:smallObstruction}
        Let $H$ be a subgraph of $\GU$ that consists of a simple cycle $C$ that either has no bridge or a single bridge that is a path or a cycle.
        Then there exists a subset $S \subseteq [l]$ with $|S| \leq |X| + 1$ such that $H$ is a subgraph of $\GU^S$.
    \end{restatable}
    \begin{prooflater}{pclaimSmallObstruction}
        Let $\mc M$ denote the connected components of $H - X$ and recall that each such component is entirely contained in a single blob.
        Our goal is to show that $|\mc M| \leq |X| + 1$.
        If $H$ contains no vertex of $X$, we have $|\mc M| = 1$ and thus the statement holds.
        Note that $H$ is 2-edge-connected and thus each component of $\mc M$ is connected to $X$ with at least two edges.
        Hence we obtain $|\mc M| \leq \frac{\sum_{x \in X}\deg_H(x)}{2}$.
        If $C$ has no bridge, all vertices have degree 2 and thus $\sum_{x \in X}\deg_H(x) \leq 2|X|$.
        Otherwise, only the two fixtures of its unique bridge cause degrees that exceed 2 and each fixture contributes one additional unit.
        Hence $\sum_{x \in X}\deg_H(x) \leq 2|X| + 2$.
        In both cases, we obtain $|\mc M| \leq |X| + 1$ as desired.
        Since each connected component in $\mc M$ is entirely contained in a single blob, there consequently exists a subset $S \subseteq [l]$ with $|S| \leq |X| + 1$ such that $\GU^S$ contains all vertices of $\mc M$.
        If $\mc M = \emptyset$, we choose $S = \{1\}$.
        Since all remaining vertices of $H$ belong to $X$, $H$ is entirely contained in $\GU^S$.
    \end{prooflater}
  
    Using this, we now first show that each individual embedding is planar.
    Let $i \in [k]$ and assume for the sake of contradiction that $\emb_i$ is not planar.
    By \Cref{lm:planarityCharacterization}, there exists a simple cycle $C$ of $G_i$ and a bridge $B$ of $C$ with fixtures $a_1 = (x_1, e_1)$ and $a_2 = (x_2, e_2)$ such that $\emb_i$ places $a_1$ and $a_2$ on different sides of $C$.
    Let $P$ be an arbitrary simple $x_1x_2$-path in $B$ (or a simple cycle if $x_1=x_2$) that starts with $e_1$ and ends with $e_2$.
    Observe that $P$ and $C$ only share vertices $x_1$ and $x_2$ and let $H = C \cup P$.
    By \Cref{cl:smallObstruction}, we find a subset $S \subseteq [l]$ with $|S| \leq |X| + 1$ such that $\GU^S$ (and thus $G_i^S)$ contains $H$.
    Let $B'$ denote the unique bridge of $C$ in $G_i^S$ that contains $P$.
    Note that $a_1$ and $a_2$ are fixtures of $B'$ that lie on different sides of $C$ in $\emb_i^S$.
    Because $G_i^S$ is connected by \Cref{lm:connectedSubinstances}, $\emb_i^S$ is therefore non-planar by \Cref{lm:planarityCharacterization}.
    But since $|S| \leq |X| + 1$, we have $S \subsetneq [l]$ and we thus obtain a contradiction to \Cref{cl:subsets}.
    Therefore, $\emb_i$ is a planar embedding of $G_i$.
    
    It remains to show that the planar embeddings $\embU = (\emb_1, \dots, \emb_k)$ are consistent.
    We first show that, for every vertex $v$ of $\GU$, the order of shared edges incident to $v$ is consistent in $\emb_1, \dots, \emb_k$.
    Let $e_1, e_2, e_3$ be three shared edges incident to $v$ in $\GU$.
         Let $j_1, j_2, j_3 \in [l]$ such that $e_1, e_2, e_3 \in E(\GU^{\{j_1, j_2, j_3\}})$.
    By \Cref{cl:subsets}, we have that $\embU^{\{j_1, j_2, j_3\}}$ is a simultaneous embedding and thus, for each $i_1, i_2 \in [k]$, $e_1, e_2, e_3$ appear in the same circular order in $\emb^{\{j_1, j_2, j_3\}}_{i_1}(v)$ and $\emb^{\{j_1, j_2, j_3\}}_{i_2}(v)$.
    Hence the same also holds for $\emb_{i_1}(v)$ and $\emb_{i_2}(v)$.
    Therefore, the embeddings of $\embU$ have consistent orderings for shared edges.

    Finally, we show that $\emb_1, \dots, \emb_k$ have consistent relative positions in the shared graph.
    Let $H, H'$ denote distinct connected components of $G_\cap$, let $C$ be a simple cycle of $H$ and let $v$ be a vertex of $H'$.
    Let $i_1, i_2 \in [k]$.
    Using \Cref{cl:smallObstruction} and accounting for the fact that $v$ may lie in an additional blob, we find a subset $S \subsetneq [l]$ with $|S| \leq |X| + 2 < l$ such that $C$ and $v$ are contained in~$\GU^S$, and thus also in $G_\cap^S$.
         Since $\embU^S$ is a simultaneous embedding of $\mc I^S$ by \Cref{cl:subsets}, $v$ lies on the same side of $C$ in $\emb_{i_1}^S$ and $\emb_{i_2}^S$.
    Thus $v$ also lies on the same side of $C$ in $\emb_{i_1}$ and $\emb_{i_2}$.
    Hence $v$ lies on the same side of every simple cycle of $H$ in $\emb_{i_1}$ and $\emb_{i_2}$, which also implies that $v$ lies in the same face of $H$ in both embeddings.
         
    Thus $\embU$ is a simultaneous embedding of $\mc I$, which concludes the proof. \qedhere
\end{proof}

Combining \Cref{lm:twoConsEmb,lm:simEmb} with $l \geq |X| + 4$ concludes the proof of \Cref{lm:addBlob}.

\section{Hardness for Constant Pathwidth and Constant $\boldsymbol{k}$}
\label{section:pathwidth}

In this section, we derive~\cref{theorem:pathwidth_hardness}, i.e., \SEFE remains \NP-hard even for three input graphs with $\pw{G_\cup} \leq 5$.
Towards this, we first describe the \emph{two-star construction} (\cref{def:two-stars})---an \SEFE instance using 2 input graphs (which we will call red and blue, respectively) whose shared graph forms two disjoint stars.
In~\cref{lm:two-star-synchronization}, we essentially show that in any simultaneous embedding of the two-star construction, the circular order around a star center consists of ``regularly interleaving suborders''.
To derive hardness, we then add a third input graph (call it green) to the construction to further constrain the stars' rotation systems, which allows us to model \textsc{Bin Packing}~\cite{GareyJohnson}.

\begin{definition}[Two-star construction]
\label{def:two-stars}
Let $T$ be a finite nonempty set of \emph{types} and $L\geq 2$ an integer.
The \emph{two-star construction} $\mc S(T,L)=(G_1,G_2)$ is the \SEFE instance obtained as follows.
Construct $\shared$ by creating a star with center $v$ and attaching, for each $t\in T$ and $i\in\mathbb Z_L$, a leaf $t^i$, called the \emph{$i$th copy of type $t$}.
Add a disjoint copy of this star, where for each original vertex $u$, $\overline u$ denotes its copy.
Finally, for each $t\in T$ and $i\in\mathbb Z_L$, add the red exclusive edge $t^i\overline{t^i}$ to $G_1$ and the blue exclusive edge $t^i\overline{t^{i+1}}$ to $G_2$; see \Cref{fig:hardness} (left).
\end{definition}

\begin{restatable}\restateref{lm:two-star-synchronization}{lemma}{lemmaTwoStarSynchronization}
\label{lm:two-star-synchronization}
Let $T$ be a finite nonempty set and $L\geq 2$ an integer.
In every simultaneous embedding of $\mc S(T,L)$, there is $r\in\mathbb Z_{|T|L}$ such that, for every $t\in T$ and $i\in\mathbb Z_L$, the clockwise circular order around $v$ contains $t^{i+1}$ exactly $r$ positions after $t^i$.
Conversely, for any circular order around $v$ satisfying this property, one can realize a corresponding simultaneous embedding of $\mc S(T,L)$.
\end{restatable}
\begin{prooflater}{plemmaTwoStarSynchronization}
We use the following well-known fact.
If a graph $H$ consists of internally disjoint paths between two vertices $u$ and $w$, then a rotation system of $H$ is planar if and only if the clockwise order of the paths around $w$ is the reverse of their clockwise order around $u$.
Indeed, the bridges of a cycle formed by two of these paths are precisely the remaining paths, and the two fixtures of such a path lie on the same side of the cycle if and only if the three paths appear in opposite clockwise orders around $u$ and $w$; the fact thus follows from \Cref{lm:planarityCharacterization}.

Observe that $G_1$ consists of the internally disjoint paths $v\,t^i\,\overline{t^i}\,\overline v$ and $G_2$ consists of the internally disjoint paths $v\,t^i\,\overline{t^{i+1}}\,\overline v$ for $t \in T$ and $i \in \mathbb Z_L$.
For a circular order $\sigma$ of the leaves of $v$, let $\sigma_1$ and $\sigma_2$ denote the circular orders of the leaves of $\overline v$ obtained from $\sigma$ by replacing each $t^i$ with $\overline{t^i}$ and with $\overline{t^{i+1}}$, respectively.
By the above fact, the rotation system of $G_1$ with clockwise orders $\sigma$ around $v$ and $\overline \sigma$ around $\overline v$ is planar if and only if $\overline\sigma$ is the reverse of $\sigma_1$, and analogously for $G_2$ and $\sigma_2$.
Since $G_1$ and $G_2$ are connected and the shared graph consists of two stars, planar embeddings of $G_1$ and $G_2$ form a simultaneous embedding if and only if they have the same clockwise orders around $v$ and $\overline v$.
Hence $\sigma$ is the clockwise order around $v$ in a simultaneous embedding of $\mc S(T, L)$ if and only if $\sigma_1 = \sigma_2$.
As $\sigma_2$ is obtained from $\sigma_1$ by replacing each $\overline{t^i}$ with $\overline{t^{i+1}}$, this is the case if and only if $\sigma$ is invariant under the shift $t^i \mapsto t^{i+1}$, i.e., if the shift maps consecutive leaves of $\sigma$ to consecutive leaves.
This in turn is equivalent to the shift moving every leaf by the same number $r$ of positions, which concludes the proof.
\end{prooflater}

We reduce from the strongly \NP-hard \textsc{Bin Packing} problem~\cite{GareyJohnson}, which asks whether, given a finite set $U$ of items with sizes $s(u) \in \mathbb{Z}^+$ for each $u \in U$, a bin capacity $B \in \mathbb{Z}^+$, and an integer $K > 0$, the set $U$ can be partitioned into disjoint subsets $U_1, U_2, \ldots, U_K$ such that $\sum_{u \in U_i} s(u) \le B$ for all $i \in [K]$.
Fix such an instance with $U = \{u_1, \dots, u_n\}$.
Without loss of generality, we assume that $s(u) \leq B$ for all $u \in U$ and that $K \geq 3$.
We construct a corresponding instance of \SEFE $(G_1,G_2,G_3)$ as follows.
Let $T=\{R,t_1,\ldots,t_n\}$, where $R$ is the \emph{reference type} and the distinct types $t_j$ represent the items $u_j$.
Further, let $(G_1,G_2) = \mc S(T,K(B+1))$. To construct $G_3$, we begin with $G_1 \cap G_2$, and
subsequently add the \emph{bin cycle} $C=R^1\,R^{B+2}\,R^{2(B+1)+1}\,\cdots\,R^{(K-1)(B+1)+1}\,R^1$
and, for each item $u_j$, the \emph{item path} $P_j=t_j^1\,t_j^2\,\cdots\,t_j^{s(u_j)+1}$; see \Cref{fig:hardness} (right).

Finally, \cref{theorem:pathwidth_hardness} follows from the following lemma.

\begin{figure}[t]
    \centering
    \includegraphics[scale=0.9]{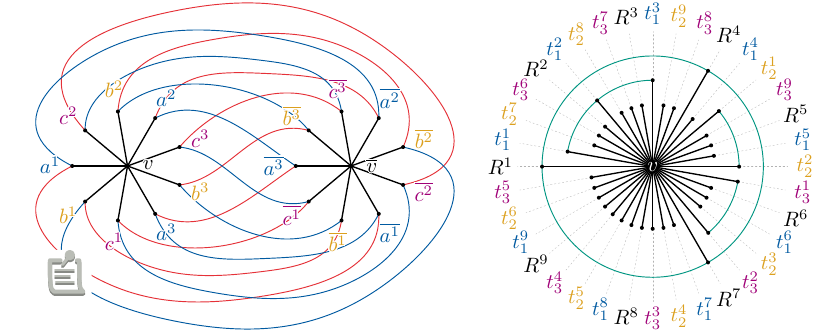}
\caption{   Left: The two-star construction for
  $T=\{\textcolor[rgb]{0,0.353,0.631}{a},
        \textcolor[rgb]{0.862,0.627,0.117}{b},
        \textcolor[rgb]{0.627,0,0.47}{c}\}$ and $L=3$, together with a simultaneous embedding; here, $r=3$ in \Cref{lm:two-star-synchronization}.
  Right: A resulting \SEFE instance for a \textsc{Bin Packing} instance
  with bin capacity $B=2$, $K=3$ bins, and item sizes
  $s(\textcolor[rgb]{0,0.353,0.631}{u_1})=2$,
  $s(\textcolor[rgb]{0.862,0.627,0.117}{u_2})=1$, and
  $s(\textcolor[rgb]{0.627,0,0.47}{u_3})=1$.
  The depicted order around $v$ (with $r=4$) corresponds to the solution
  $U_1=\{\textcolor[rgb]{0,0.353,0.631}{u_1}\}$,
  $U_2=\{\textcolor[rgb]{0.862,0.627,0.117}{u_2}, \textcolor[rgb]{0.627,0,0.47}{u_3}\}$, and $U_3=\emptyset$.
  The second star and red and blue edges are omitted to avoid clutter. }
    \label{fig:hardness}
\end{figure}

\begin{restatable}\restateref{lm:sefe-to-packing}{lemma}{lemmaSefeToPacking}
\label{lm:sefe-to-packing}
The \textsc{Bin Packing} instance is positive if and only if $(G_1,G_2,G_3)$ admits a simultaneous embedding. Moreover, $(G_1,G_2,G_3)$ can be constructed in polynomial time and the pathwidth of $G_\cup$ is at most $5$.
\end{restatable}
\begin{prooflater}{plemmaSefeToPacking}
Let $m \coloneqq |T| = n+1$ and $N \coloneqq mL$.
We call the vertices of $C$ \emph{bin vertices} and the edges of item paths \emph{item edges}.

Since \textsc{Bin Packing} is strongly \NP-hard, we can assume that all numbers are polynomially bounded; hence $\GU$, which has $2N+2$ vertices, can be constructed in polynomial time.
Deleting $v$, $\overline v$, and $t^1$ for every $t \in T$ from $\GU$ leaves, for each $t$, the path $\overline{t^2}\,t^2\,\overline{t^3} \cdots t^L\,\overline{t^1}$ together with green edges that close edge-disjoint cycles with it, which has pathwidth at most~$2$.
As the vertices $t^1$ lie in distinct components of $\GU - \{v, \overline v\}$, we obtain $\pw{\GU} \leq 2 + 1 + 2 = 5$.

Now let $U_1, \dots, U_K$ be a solution of the \textsc{Bin Packing} instance.
For $u_j \in U_w$, let $o_j \coloneqq (w - 1)(B+1) + \sum_{j' < j,\, u_{j'} \in U_w} s(u_{j'})$ and let $\sigma$ be the circular order obtained by concatenating the blocks $R^\ell, t_1^{\ell - o_1}, \dots, t_n^{\ell - o_n}$ for $\ell = 1, \dots, L$; see \Cref{fig:hardness} (right).
As $t^{i+1}$ lies $m$ positions after $t^i$ in $\sigma$, \Cref{lm:two-star-synchronization} yields a simultaneous embedding of $(G_1, G_2)$ with clockwise order $\sigma$ around $v$.
Since the star of $\overline v$ is a connected component of $G_3$, it remains to embed the component $H$ of $G_3$ containing $v$ with clockwise order $\sigma$ around $v$.
For $u_j \in U_w$, the vertices of $P_j$ lie in the blocks $o_j + 1, \dots, o_j + s(u_j) + 1$, i.e., in the order of $P_j$ between the consecutive bin vertices $R^{1 + (w-1)(B+1)}$ and $R^{1 + w(B+1)}$ (as $\sum_{u \in U_w} s(u) \leq B$) and after all item paths $P_{j'}$ with $j' < j$ and $u_{j'} \in U_w$.
Hence no two green edges have alternating endpoints in $\sigma$.
Adding the cycle $Z$ through the leaves of $v$ in the order $\sigma$ thus yields a wheel whose planar embedding with clockwise order $\sigma$ around $v$ can be extended by embedding the green edges as chords into the outer face of $Z$; removing $Z$ yields the desired embedding of $H$.

Conversely, let $(\emb_1, \emb_2, \emb_3)$ be a simultaneous embedding, let $\sigma$ be the clockwise order around $v$, and let $\mathrm{pos}(x) \in \mathbb Z_N$ denote the clockwise position of a leaf $x$ in $\sigma$.
By \Cref{lm:two-star-synchronization}, there is $r \in \mathbb Z_N$ with $\mathrm{pos}(t^{i+1}) = \mathrm{pos}(t^i) + r$ for all $t \in T$ and $i \in \mathbb Z_L$.
As the $L$ copies of each type $t$ have distinct positions $\mathrm{pos}(t^1) + jr$ for $j \in \mathbb Z_L$, the multiples of $r$ are precisely the multiples of $m$.
Hence the copies of $R$ occupy every $m$-th position and the bin vertices, which have positions $\mathrm{pos}(R^1) + w(B+1)r$ for $w \in \mathbb Z_K$, occupy every $(B+1)m$-th position, splitting the remaining leaves into $K$ \emph{bins} with $B$ copies of $R$ each.

Let $xx'$ be an item edge with $x = t_j^i$ and $x' = t_j^{i+1}$.
As the cycle $v\,x\,x'$ is disjoint from $C$, all bin vertices lie on the same side of it by \Cref{lm:planarityCharacterization}, so $x$ and $x'$ lie in a common bin.
Hence each item path lies within a single bin and, treating positions within a bin as integers, $\mathrm{pos}(x') - \mathrm{pos}(x) = \tilde r$ for the unique integer $\tilde r \equiv r \pmod N$ with $|\tilde r| < N/2$, as bins have fewer than $(B+1)m \leq N/2$ leaves.
Since $\tilde r \neq 0$ is a multiple of $m$, the open interval $I_{xx'}$ between $\mathrm{pos}(x)$ and $\mathrm{pos}(x')$ contains a copy of $R$.
For an item edge $yy'$ of another item path, which is disjoint from $v\,x\,x'$, \Cref{lm:planarityCharacterization} implies that $y$ and $y'$ lie on the same side of $v\,x\,x'$.
Thus $I_{xx'}$ and $I_{yy'}$ do not cross and, having the same length, are not nested either, i.e., they are disjoint.
The same holds for distinct edges of the same item path, as its consecutive vertices are $\tilde r$ positions apart.
Hence, if $U_w$ denotes the set of items whose paths lie in the $w$-th bin, the $\sum_{u \in U_w} s(u)$ item edges in this bin have disjoint intervals that each contain one of its $B$ copies of $R$.
Therefore, $\sum_{u \in U_w} s(u) \leq B$ for all $w \in [K]$, which concludes the proof.
\end{prooflater}

\section{Concluding Remarks}
\label{section:conclusion}
Taken together, the results presented in this work bring us closer to a complete understanding of the boundaries of tractability for \ssefe. A natural open question arising from our work is whether our result can be made more ``algorithmically tame'' by avoiding the tower of exponents in the running time guarantees. More broadly, it would be interesting to settle the complexity of $2$-\sefe\ and to understand whether any of the recent results for \ssefe can be lifted to the non-sunflower setting.

\subparagraph*{Statement on the Use of AI Tools.} The authors acknowledge the use of AI tools (GPT-6 Astra and Claude Opus 5.5) for proofreading and correctness verification. Moreover, the idea underlying the hardness reduction (Theorem~\ref{theorem:pathwidth_hardness}) complementing our main result originated from a discussion with GPT-6 Astra; however, no such tools were used for the idea and writeup of the proof of the main result (Theorem~\ref{theorem:vertex_integrity_fpt}). The authors assume full responsibility for all claims in the article.

\bibliography{refs}

\onlyShort{
\newpage
\appendix

\crefalias{section}{appendix}
\crefalias{subsection}{appendix}

\section{Omitted Proofs from \Cref{section:prelims}}
\label{appendix:prelims}  

\lemmaPlanarityCharacterization*
\label{lm:planarityCharacterization*}
\plemmaPlanarityCharacterization

\lemmaMerges*
\label{lm:merges*}
\plemmaMerges

\lemmaExclConnected*
\label{lm:exclconnected*}
\plemmaExclConnected

\section{Omitted Proofs from \Cref{section:vertex-integrity}}
\label{appendix:vertex-integrity}  

\lemmaConnectedSubinstances*
\label{lm:connectedSubinstances*}
\plemmaConnectedSubinstances

\lemmaReducedEquivalent*
\label{lm:reducedEquivalent*}
\plemmaReducedEquivalent

\lemmaComputeReduced*
\label{lm:computeReduced*}
\plemmaComputeReduced

\lemmaTwoconsistent*
\label{lm:twoconsistent*}
\plemmaTwoconsistent

\lemmaAngleAssignment*
\label{lm:angleAssignment*}
\plemmaAngleAssignment

\lemmaTwoConsEmb*
\label{lm:twoConsEmb*}
\plemmaTwoConsEmb

\lemmaMConsistent*
\label{lm:m-consistent*}
\plemmaMConsistent

\claimSmallObstruction*
\label{cl:smallObstruction*}
\pclaimSmallObstruction

\section{Omitted Proofs from \Cref{section:pathwidth}}
\label{appendix:pathwidth}

\lemmaTwoStarSynchronization*
\label{lm:two-star-synchronization*}
\plemmaTwoStarSynchronization

\lemmaSefeToPacking*
\label{lm:sefe-to-packing*}
\plemmaSefeToPacking
}

\end{document}